\documentclass[letterpaper]{article} 
\usepackage{aaai23}  
\usepackage{times}  
\usepackage{helvet}  
\usepackage{courier}  
\usepackage[hyphens]{url}  
\usepackage{graphicx} 
\urlstyle{rm} 
\usepackage{natbib}  
\usepackage{caption} 
\frenchspacing  
\setlength{\pdfpagewidth}{8.5in}  
\setlength{\pdfpageheight}{11in}  
%

\usepackage{booktabs}
\usepackage{setspace}
\usepackage{amsthm}
\usepackage{amssymb}
\usepackage{amsmath}
\usepackage{xcolor}
\usepackage{float}
\usepackage{accents}
\usepackage{algorithm}
\usepackage[noend]{algorithmic}
\usepackage{subcaption}
\usepackage{thmtools}
\usepackage{thm-restate}
\usepackage{nccmath}
\usepackage{enumitem}
\usepackage[symbol]{footmisc}
\usepackage{adjustbox}
\usepackage{multirow}

%
\usepackage{newfloat}
\usepackage{listings}
\DeclareCaptionStyle{ruled}{labelfont=normalfont,labelsep=colon,strut=off} 
\lstset{%
	basicstyle={\footnotesize\ttfamily},
	numbers=left,numberstyle=\footnotesize,xleftmargin=2em,
	aboveskip=0pt,belowskip=0pt,%
	showstringspaces=false,tabsize=2,breaklines=true}
\floatstyle{ruled}
\newfloat{listing}{tb}{lst}{}
\floatname{listing}{Listing}
%
\pdfinfo{
/TemplateVersion (2023.1)
}

\newcommand{\ls}{\mathnormal{U}}
\newcommand{\rs}{\mathnormal{V}}

\newcommand{\lv}{\mathnormal{u}}
\newcommand{\rv}{\mathnormal{v}}

\newcommand{\wto}{\mathnormal{w^{O}_e}}
\newcommand{\wtl}{\mathnormal{w^{\ls}_e}}
\newcommand{\wtr}{\mathnormal{w^{\rs}_e}} 

\newcommand{\pl}{\mathnormal{\Delta_{\lv}}}
\newcommand{\pr}{\mathnormal{\Delta_{\rv}}} 


\newcommand{\pv}{\mathnormal{p_{v}}}
\newcommand{\pvt}{\mathnormal{p_{v,t}}}
\newcommand{\nv}{\mathnormal{n_{v}}}

\newcommand{\xet}{\mathnormal{x_{e,t}}}

\newcommand{\xets}{\mathnormal{x^*_{e,t}}}
\newcommand{\yets}{\mathnormal{y^*_{e,t}}}
\newcommand{\zets}{\mathnormal{z^*_{e,t}}}

\newcommand{\ret}{\mathnormal{\rho_{e,t}}}

\newcommand{\grps}{\mathcal{G}}

\newcommand{\ah}{\mathnormal{\hat{\alpha}}}
\newcommand{\bh}{\mathnormal{\hat{\beta}}}
\newcommand{\gh}{\mathnormal{\hat{\gamma}}}

\newcommand{\ag}{\mathnormal{\alpha_{G}}}
\newcommand{\ai}{\mathnormal{\alpha_{I}}}

\newcommand{\floor}[1]{\left\lfloor #1 \right\rfloor}
\newcommand{\ceil}[1]{\left\lceil #1 \right\rceil}

\newtheorem{theorem}{Theorem}[section]

\newtheorem{corollary}{Corollary}[section]

\DeclareMathOperator{\OPT}{\mathrm{OPT}}
\DeclareMathOperator{\ALG}{\mathrm{ALG}}

\DeclareMathOperator{\tsf}{\mathrm{TSGF}_{\textbf{KIID}}}
\DeclareMathOperator{\tsfkad}{\mathrm{TSGF}_{\textbf{KAD}}}

\DeclareMathOperator{\E}{\mathbb{E}}
\DeclareMathOperator{\EI}{\mathbb{E}_{\mathcal{I}}}





\newcommand{\PPDR}{\textbf{PPDR}}


\newcommand{\hide}[1]{}




\setcounter{secnumdepth}{2} 

%



\title{Rawlsian Fairness in Online Bipartite Matching: Two-Sided, Group, and Individual}
\author {
    Seyed Esmaeili \textsuperscript{\rm 1},
    Sharmila Duppala \textsuperscript{\rm 1},
    Davidson Cheng \textsuperscript{\rm 2},
    Vedant Nanda \textsuperscript{\rm 1},
    Aravind Srinivasan \textsuperscript{\rm 1},
    John P. Dickerson \textsuperscript{\rm 1}
}
\affiliations {
    \textsuperscript{\rm 1} University of Maryland, College Park\\
    \textsuperscript{\rm 2} Colorado College\\
    \{esmaeili,sduppala,vedant,srin,john\}@cs.umd.edu,  d\_cheng@coloradocollege.edu
}

\begin{document}

\maketitle

\begin{abstract}
Online bipartite-matching platforms are ubiquitous and find applications in important areas such as crowdsourcing and ridesharing. In the most general form, the platform consists of three entities: two sides to be matched and a platform operator that decides the matching. The design of algorithms for such platforms has traditionally focused on the operator’s (expected) profit. Since fairness has become an important consideration that was ignored in the existing algorithms a collection of online matching algorithms have been developed that give a fair treatment guarantee for one side of the market at the expense of a drop in the operator’s profit. In this paper, we generalize the existing work to offer fair treatment guarantees to \emph{both} sides of the market \emph{simultaneously}, at a calculated worst case drop to operator profit. We consider group and individual Rawlsian fairness criteria. Moreover, our algorithms have theoretical guarantees and have adjustable parameters that can be tuned as desired to balance the trade-off between the utilities of the three sides. We also derive hardness results that give clear upper bounds over the performance of any algorithm. 
\end{abstract}

\section{Introduction}\label{sec:intro}
Online bipartite matching has been used to model many important applications such as crowdsourcing \cite{Ho12:Online,tong2016online,dickerson2019balancing}, rideshare~\cite{Lowalekar18:Online,dickerson2021allocation,Ma21:Fairness}, and online ad allocation~\cite{Goel08:Online,mehta2013online}. In the most general version of the problem, there are three interacting entities: two sides of the market to be matched and a platform operator which assigns the matches. For example, in rideshare, riders on one side of the market submit requests, drivers on the other side of the market can take requests, and a platform operator such as Uber or Lyft matches the riders' requests to one or more available drivers. In the case of crowdsourcing, organizations offer tasks, workers look for tasks to complete, and a platform operator such as Amazon Mechanical Turk (MTurk) or Upwork matches tasks to workers.   

Online bipartite matching algorithms are often designed to optimize a performance measure---usually, maximizing overall profit for the platform operator or a proxy of that objective. However, fairness considerations were largely ignored.  
This is troubling especially given that recent reports have indicated that different demographic groups may not receive similar treatment. For example, in rideshare platforms once the platform assigns a driver to a rider's request, both the rider and the driver have the option of rejecting the assignment and it has been observed that membership in a demographic group may cause adverse treatment in the form of higher rejection. Indeed, \cite{woman-driver,false-fair,public} report that drivers could reject riders based on attributes such as gender, race, or disability. Conversely, \cite{rosenblat2016discriminating} reports that drivers are likely to receive less favorable ratings if they belong to certain demographic groups. 
%
A similar phenomenon exists in crowdsourcing~\cite{galperin2017geographical}.
Moreover, even in the absence of such evidence of discrimination, as algorithms become more prevalent in making decisions that directly affect the welfare of individuals~\cite{barocas-hardt-narayanan,dwork2012fairness}, it becomes important to guarantee a standard of fairness. Also, while much of our discussion focuses on the for-profit setting for concreteness, similar fairness issues hold in not-for-profit scenarios such as the fair matching of individuals with health-care facilities, e.g., in the time of a pandemic.

In response, a recent line of research has been concerned with the issue of designing fair algorithms for online bipartite matching. \cite{lesmana2019balancing,ma2022group,xutrade} present algorithms which give a minimum utility guarantee for the drivers at a bounded drop to the operator's profit. Conversely, \cite{nanda2019balancing} give guarantees for both the platform operator and the riders instead. Finally,  \cite{suhr2019two} shows empirical methods that achieve fairness for both the riders and drivers simultaneously but lacks theoretical guarantees and ignores the operator's profit.

Nevertheless, the existing work has a major drawback in terms of optimality guarantees. Specifically, the two sides being matched along with the platform operator constitute the three main interacting entities in online matching and despite the significant progress in fair online matching none of the previous work considers all three sides simultaneously. In this paper, we derive algorithms with theoretical guarantees for the platform operator's profit as well as fairness guarantees for the two sides of the market. Unlike the previous work we not only consider the size of the matching but also its quality. Further, we consider two online arrival settings: the \textbf{KIID} and the richer \textbf{KAD} setting (see Section~\ref{sec:model} for definitions). We consider both group and individual notions of Rawlsian fairness and interestingly show a reduction from individual fairness to group fairness in the \textbf{KAD} setting. Moreover, we show upper bounds on the optimality guarantees of any algorithm and derive impossibility results that show a conflict between group and individual notions of fairness. Finally, we empirically test our algorithms on a real-world dataset.

\section{Related Work}\label{sec:rw}
It is worth noting that similar to our work, \cite{patro2020fairrec} and \cite{basu2020framework} have considered two-sided fairness as well, although in the setting of recommendation systems where a different model is applied---and, critically, a separate objective for the operator's profit was not considered.

Fairness in bipartite matching has seen significant interest recently. The fairness definition employed has consistently been the well-known Rawlsian fairness \cite{rawls1958justice} (i.e. max-min fairness) or its generalization Leximin fairness.\footnote{Leximin fairness maximizes the minimum utility like max-min fairness. However, it proceeds to maximize the second worst utility, and so on until the list is exhausted.} We note that the objective to be maximized (other than the fairness objective) represents operator profit in our setting.

The case of offline and unweighted maximum cardinality matching is addressed by~\cite{garcia2020fair}, who give an algorithm with Leximin fairness guarantees for one side of the market (one side of the bipartite graph) and show that this can be achieved without sacrificing the size of the match. 
Motivated by fairness consideration for drivers in ridesharing,~\cite{lesmana2019balancing} considers the problem of offline and weighted matching. Specifically, they show an algorithm with a provable trade-off between the operator's profit and the minimum utility guaranteed to any vertex in one-side of the market.

Recently, \cite{ma2020group} considered fairness for the online part of the graph through a group notion of fairness. In particular, the utility for a group is added across the different types and is minimized for the group worst off, in rough terms their notion translates to maximizing the minimum utility accumulated by a group throughout the matching. Their notion of fairness is very similar to the one we consider here. However, \cite{ma2020group} considers fairness only on one side of the graph and ignores the operator's profit. Further, only the matching size is considered to measure utility, i.e. edges are unweighted. 

A new notion of group fairness in online matching is considered in \cite{sankar2021matchings}. In rough terms, their group fairness criterion amounts to establishing a quota for each group and ensuring that the matching does not exceed that quota. This notion can be seen as ensuring that the system is not dominated by a specific group and is in some sense an opposite to max-min fairness as the utility is upper bounded instead of being lower bounded. Further, the fairness guarantees considered are one-sided as well.

On the empirical side of fair online matching, \cite{Mattei17:Mechanisms} and~\cite{Lee19:WeBuildAI} give application-specific treatments in the context of deceased-donor organ allocation and food bank provisioning, respectively. More related to our work is that of \cite{suhr2019two,zhou2021subgroup} which consider the rideshare problem and provide algorithms to achieve fairness for both sides of the graph simultaneously, however both papers lack theoretical guarantees and in the case of \cite{suhr2019two} the operator's profit is not considered.

\section{Online Model \& Optimization Objectives} \label{sec:model}
Our model follows that of \cite{mehta2013online,feldman2009online,bansal2010lp,alaei2013online} and others. We have a bipartite graph $G=(\ls,\rs,E)$ where $U$ represents the set of static (offline) vertices (workers) and $V$ represents the set of online vertex types (job types) which arrive dynamically in each round. The online matching is done over $T$ rounds. In a given round $t$, a vertex of type $v$ is sampled from $V$ with probability $\pvt$ with $\sum_{v \in V} \pvt = 1, \forall t \in [T]$ the probability $\pvt$ is known beforehand for each type $v$ and each round $t$. This arrival setting is referred to as the known adversarial distribution (\textbf{KAD}) setting \cite{alaei2013online,dickerson2021allocation}. When the distribution is stationary, i.e. $\pvt = \pv, \forall t \in [T]$, we have the arrival setting of the known independent identical distribution (\textbf{KIID}). Accordingly, the expected number of arrivals of type $v$ in $T$ rounds is $\nv=\sum_{t \in [T]} \pvt$, which reduces to $\nv=T \pv$ in the \textbf{KIID} setting. We assume that $\nv \in \mathbb{Z^{+}}$ for \textbf{KIID} \cite{bansal2010lp}. Every vertex $u$ ($v$) has a group membership,\footnote{For a clearer representation we assume each vertex belongs to one group although our algorithms apply to the case where a vertex can belong to multiple groups.} with $\grps$ being the set of all group memberships; for any vertex $u \in U$, we denote its group memberships by $g(u) \in \grps$ (similarly, we have $g(v)$ for $v \in V$). Conversely, for a group $g$, $U(g)$ ($V(g)$) denotes the subset of $U$ ($V$) with group membership $g$. A vertex $u$ ($v$) has a set of incident edges $E_u$ ($E_v$) which connect it to vertices in the opposite side of the graph. In a given round, once a vertex (job) $v$ arrives, an irrevocable decision has to be made on whether to reject $v$ or assign it to a neighbouring vertex $u$ (where $(u,v) \in E_v$) which has not been matched before. Suppose, that $v$ is assigned to $u$, then the assignment is not necessarily successful rather it succeeds with probability $p_e=p_{(u,v)} \in [0,1]$. This models the fact that an assignment could fail for some reason such as the worker refusing the assigned job. Furthermore, each vertex $u$ has patience parameter $\pl \in \mathbb{Z^{+}}$ which indicates the number of failed assignments it can tolerate before leaving the system, i.e. if $u$ receives $\pl$ failed assignments then it is deleted from the graph. Similarly, a vertex $v$ has patience $\pr \in \mathbb{Z^{+}}$, if a vertex $v$ arrives in a given round, then it would tolerate at most $\pr$ many failed assignments in that round before leaving the system.

For a given edge $e=(\lv,\rv) \in E$, we let each entity assign its own utility to that edge. In particular, the platform operator assigns a utility of $\wto$, whereas the offline vertex $\lv$ assigns a utility of $\wtl$, and the online vertex $\rv$ assigns a utility of $\wtr$. This captures entities' heterogeneous wants. For example, in ridesharing, drivers may desire long trips from nearby riders, whereas the platform operator would not be concerned with the driver's proximity to the rider, although this maybe the only consideration the rider has. Similar motivations exist in crowdsourcing as well. We finally note that most of the details of our model such the \textbf{KIID} and \textbf{KAD} arrival settings as well as the vertex patience follow well-established and pratically motivated model choices in online matching, see Appendix (\ref{app:model_details}) for more details.

Letting $\mathcal{M}$ denote the set of successful matchings made in the $T$ rounds, then we consider the following optimization objectives: 
\begin{itemize}[leftmargin=*]
    \item \textbf{Operator's Utility (Profit):} The operator's expected profit is simply the expected sums of the profits across the matched edges, this leads to $\E[\sum_{e \in \mathcal{M}} \wto]$. 
    
    \item \textbf{Rawlsian Group Fairness:} 
        \begin{itemize}[leftmargin=*]
            \item \textbf{Offline Side:} Denote by $\mathcal{M}_{\lv}$ the subset of edges in the matching that are incident on $\lv$. Then our fairness criterion is equal to $$\min\limits_{g \in \grps} \frac{\E[\sum_{u \in U(g)}(\sum_{e \in \mathcal{M}_{\lv}} \wtl )]}{|U(g)|}.$$
            this value equals the minimum average expected utility received by a group in the offline side $U$. 

            \item \textbf{Online Side:} Similarly, we denote by $\mathcal{M}_{\rv}$ the subset of edges in the matching that are incident on vertex $\rv$, and define the fairness criterion to be $$\min\limits_{g \in \grps} \frac{\E[\sum_{v \in V(g)} (\sum_{e \in \mathcal{M}_{\rv}} \wtr)]}{\sum_{v \in V(g)} \nv}.$$ this value equals the minimum average expected utility received throughout the matching by any group in the online side $V$.
        \end{itemize}
        
    \item \textbf{Rawlsian Individual Fairness:} 
        \begin{itemize}[leftmargin=*]
            \item \textbf{Offline Side:} The definition here follows from the group fairness definition for the offline side by simply considering that each vertex $u$ belongs to its own distinct group. Therefore, the objective is $\min\limits_{u \in U}\E[\sum_{e \in \mathcal{M}_{\lv}} \wtl ]$. 

            \item \textbf{Online Side:} Unlike the offline side, the definition does not follow as straightforwardly. Here we cannot obtain a valid definition by simply assigning each vertex type its own group. Rather, we note that a given individual is actually a given arriving vertex at a given round $t \in [T]$, accordingly our fairness criterion is the minimum expected utility an individual receives in a given round, i.e. $\min\limits_{t \in [T]} \E[\sum_{e \in \mathcal{M}_{\rv_t}} \wtr)]$, where $\rv_t$ is the vertex that arrived in round $t$.  
        \end{itemize}
\end{itemize}

\vspace{-0.2cm}
\section{Main Results}
\paragraph{Performance Criterion:} We note that we are in the online setting, therefore our performance criterion is the competitive ratio. Denote by $\mathcal{I}$ the distribution for the instances of matching problems, then $\OPT(\mathcal{I})=\E_{I \sim \mathcal{I}}[\OPT(I)]$ where $\OPT(I)$ is the optimal value of the sampled instance $I$. Similarly, for a given algorithm $\ALG$, we define the value of its objective over the distribution $\mathcal{I}$ by $\ALG(\mathcal{I})=\E_{\mathcal{D}}[\ALG(I)]$ where the expectation $\E_{\mathcal{D}}[.]$ is over the randomness of the instance and the algorithm. The competitive ratio is then defined as $\min_{\mathcal{I}} \frac{\ALG(\mathcal{I})}{\OPT(\mathcal{I})}$.

In our work, we address optimality guarantees for each of the three sides of the matching market by providing algorithms with competitive ratio guarantees for the operator's profit and the fairness objectives of the static and online side of the market simultaneously. Specifically, for the \textbf{KIID} arrival setting we have:  

\begin{restatable}{theorem}{mainthkiid}\label{main_th_kiid}
For the \textbf{KIID} setting, algorithm $\tsf(\alpha,\beta,\gamma)$ achieves a competitive ratio of $(\frac{\alpha}{2e},\frac{\beta}{2e},\frac{\gamma}{2e})$\footnote[2]{Here, $e$ denotes the Euler number, not an edge in the graph.} simultaneously over the operator's profit, the group fairness objective for the offline side, and the group fairness objective for the online side, where $\alpha,\beta,\gamma>0$ and $\alpha+\beta+\gamma \leq 1$. 
\end{restatable}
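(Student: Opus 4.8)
The plan is to follow the standard LP-benchmark-plus-online-rounding paradigm, but to arrange it so that a \emph{single} run of $\tsf$ carries all three objectives at once. First I would write a fluid relaxation (benchmark LP) in variables $\xet$, the marginal probability that the algorithm probes edge $e=(\lv,\rv)$ in round $t$. The constraints encode what any (even clairvoyant) policy can do: the online patience $\sum_{e \in E_\rv}\xet \le \pr\,\pvt$ together with $\xet \le \pvt$ for each arriving type, and the offline matched-at-most-once constraint $\sum_{t}\sum_{e \in E_\lv}\xet\,p_e \le 1$. To this feasible region I would attach the three functionals --- the operator value $\sum_{e,t}\wto p_e \xet$ and the two group-fairness functionals $\min_{g}\frac{1}{|\ls(g)|}\sum_{\lv \in \ls(g)}\sum_{e \in E_\lv,t}\wtl p_e \xet$ and its online analogue --- and show, by reading off the marginals induced by the offline optimum of a sampled instance, that each objective's LP optimum upper-bounds the corresponding $\OPT(\mathcal{I})$.

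Second, since the \textbf{KIID} instance is stationary I would argue by round-averaging that each single-objective LP admits a time-homogeneous optimum, and then combine them. Let $x^{O},x^{U},x^{V}$ be the optima of the operator, offline-fairness, and online-fairness LPs; because every constraint is a packing constraint with nonnegative right-hand side and $\alpha+\beta+\gamma\le 1$, the convex combination $\xet=\alpha x^{O}_{e,t}+\beta x^{U}_{e,t}+\gamma x^{V}_{e,t}$ is again feasible. Monotonicity and nonnegativity of each objective in $x$ then give, term by term, operator value $\ge \alpha\,\OPT^{LP}_{O}$, offline-fairness value $\ge \beta\,\OPT^{LP}_{U}$ (using that a minimum over groups of sums of nonnegative terms dominates $\beta$ times the minimum of the $x^{U}$-part alone), and online-fairness value $\ge \gamma\,\OPT^{LP}_{V}$. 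Thus one feasible $\xet$ simultaneously captures the three $\alpha,\beta,\gamma$-scaled benchmarks, which is precisely the role of the three parameters.

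Third, I would define $\tsf$ to round this $\xet$ online: in round $t$, when type $\rv$ arrives, select each incident edge $e$ with probability $\xet/\pvt$, then probe the selected edges whose offline endpoint $\lv$ is still unmatched, in random order, until $\rv$ is matched or exhausts its patience $\pr$. The target inequality is $\Pr[e \text{ probed at } t]\ge \frac{1}{2e}\xet$, which transfers to all three objectives at once since each is assembled from exactly these per-edge probing probabilities. The $\frac{1}{e}$ factor would come from lower-bounding the availability of $\lv$: in \textbf{KIID} the offline constraint spreads total mass $\le 1$ across $T$ rounds, so the per-round success probability on $\lv$ is $O(1/T)$ and the survival product $\prod_{t'<t}(1-q_{t'})\ge e^{-1}$. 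The residual $\frac{1}{2}$ would come from the online patience: selecting all incident edges independently can oversubscribe the $\pr$ probe budget, and I would bound the probability that a given selected edge is crowded out.

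The hard part will be this rounding analysis, and in particular making the two patience constraints and the availability bound hold \emph{simultaneously}. The product bound for $\lv$'s availability relies on near-independence of the per-round success events, which is clean only because \textbf{KIID} forces $O(1/T)$ per-round mass; I would need to check it survives conditioning on $\rv$'s random probe order and on $\lv$ not already having been claimed by competing types. Securing the clean $\frac{1}{2}$ loss for online-patience crowding --- rather than a $\pr$-dependent loss --- is the most delicate point and is where I expect most of the work to lie. Once $\Pr[e \text{ probed at } t]\ge \frac{1}{2e}\xet$ is established, chaining it with the convex-combination and benchmark bounds yields $\ALG_{O}\ge \frac{1}{2e}\alpha\,\OPT^{LP}_{O}\ge \frac{\alpha}{2e}\OPT_{O}(\mathcal{I})$, and the analogous chains for the two fairness objectives, giving the stated $(\frac{\alpha}{2e},\frac{\beta}{2e},\frac{\gamma}{2e})$ competitive ratio.
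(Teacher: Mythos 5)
Your benchmark construction and the convex-combination step are sound (and your ``read off the marginals of the per-instance optimum'' validity argument is a legitimately more elementary route than the paper's LP-duality argument, which instead shows $LP(\EI[G])\ge \EI[LP(G)]$ via a transferred dual solution). The first genuine gap is that your LP drops the offline patience constraint: the paper's benchmark has \emph{both} $\sum_{e\in E_u} x_e p_e\le 1$ and $\sum_{e\in E_u} x_e\le \Delta_u$, and your relaxation keeps only the former. Since small $p_e$ allow $\sum_{e\in E_u} x_e$ to be arbitrarily large under the capacity constraint alone, your LP can instruct the algorithm to probe a vertex $u$ far more than $\pl$ times in expectation, and $u$ then leaves by patience exhaustion early on. Correspondingly, your availability analysis accounts only for $u$ being claimed by a match; the paper's safety event is the conjunction of ``no successful match before $t$'' (product bound, $(1-\frac1T)^{t-1}$) and ``fewer than $\Delta_u$ probes before $t$'' (a Markov bound anchored precisely to $\sum_{e\in E_u} x_e\le\Delta_u$, giving $1-\frac{t-1}{T}$), combined by positive correlation. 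You also misattribute the $1/e$: it is not the survival product alone (that would give $\frac1T\sum_t(1-\frac1T)^{t-1}\to 1-e^{-1}$), but the limit of $\frac1T\sum_t\bigl(1-\tfrac{t-1}{T}\bigr)\bigl(1-\tfrac1T\bigr)^{t-1}=\int_0^1(1-s)e^{-s}\,ds=e^{-1}$, i.e.\ both availability factors jointly.

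The second gap is the per-round rounding, exactly where you flag ``the most delicate point.'' With independent selection of each incident edge with probability $\xet/\pvt$, the crowd-out of the online patience budget does not admit a clean $\frac12$ bound --- the loss you would get degrades with $\pr$, as you suspect. The paper does not bound crowd-out probabilistically; it \emph{eliminates} it: \PPDR\ applies dependent rounding to $\vec{x}_v$, and degree preservation together with the constraint $\sum_{e\in E_v}x_e\le\Delta_v$ guarantees with probability one that at most $\Delta_v$ edges are selected, so patience never binds within a round. The factor $\frac12$ then arises from a different source: conditioned on $e$ being selected, the probability that some earlier edge in the random permutation is selected \emph{and succeeds} is at most $\sum_{e'}x^*_{e'}p_{e'}\cdot\frac12\le\frac12$, using the negative-correlation property ($\E[X_{e'}\mid X_e=1]\le x^*_{e'}$) and the constraint $\sum_{e'\in E_v}x^*_{e'}p_{e'}\le 1$. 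Dependent rounding is thus the missing tool in your plan. One difference that is merely cosmetic: you round a single combined vector $\alpha x^O+\beta x^U+\gamma x^V$, whereas $\tsf$ flips an $(\alpha,\beta,\gamma)$-coin and runs \PPDR\ on one of the three LP optima; the probe marginals coincide, and either variant delivers the stated $(\frac{\alpha}{2e},\frac{\beta}{2e},\frac{\gamma}{2e})$ guarantee (as $T\to\infty$) once the two gaps above are repaired.
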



The following two theorems hold under the condition that $p_e=1,  \forall e\in E$. Specifically for the \textbf{KAD} setting we have: 
\begin{restatable}{theorem}{mainthkad}\label{main_th_kad}
For the \textbf{KAD} setting, algorithm $\tsfkad(\alpha,\beta,\gamma)$ achieves a competitive ratio of $(\frac{\alpha}{2},\frac{\beta}{2},\frac{\gamma}{2})$ simultaneously over the operator's profit, the group fairness objective for the offline side, and the group fairness objective for the online side, where $\alpha,\beta,\gamma>0$ and $\alpha+\beta+\gamma \leq 1$. 
\end{restatable}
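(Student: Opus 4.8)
The plan is to follow an LP-benchmark-plus-online-rounding strategy, where the assumption $p_e=1$ is exactly what lets me attenuate the rounding so that every edge is matched with probability \emph{exactly} $\tfrac12\bar x_{e,t}$. First I would write the natural benchmark LP over variables $x_{e,t}$ (intended to represent $\Pr[e\text{ matched in round }t]$) with the matching constraints $\sum_{t\in[T]}\sum_{e\in E_u}x_{e,t}\le 1$ for every $u\in\ls$, $\sum_{e\in E_v}x_{e,t}\le \pvt$ for every type $v$ and round $t$, and $x_{e,t}\ge 0$. For any (even clairvoyant) algorithm the induced marginals $\Pr[e\text{ matched at }t]$ satisfy these: the first constraint because $p_e=1$ means each $u$ is matched at most once, the second because an edge incident to $v$ can be matched in round $t$ only if a type-$v$ vertex arrives. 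Hence maximizing $\sum_{e,t}\wto x_{e,t}$ upper-bounds $\OPT$ for the operator; maximizing $\lambda$ subject to $\frac{1}{|\ls(g)|}\sum_{u\in\ls(g)}\sum_{e\in E_u,t}\wtl x_{e,t}\ge\lambda$ for all $g\in\grps$ upper-bounds the offline group-fairness optimum; and the analogous max-min LP upper-bounds the online group-fairness optimum. I would call the three optimal solutions $x^{(1)},x^{(2)},x^{(3)}$ with values $\OPT_1,\OPT_2,\OPT_3$.

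Next I would form the single fractional solution $\bar x=\alpha x^{(1)}+\beta x^{(2)}+\gamma x^{(3)}$. Since every constraint is a $\le$-inequality with nonnegative coefficients and $\alpha+\beta+\gamma\le 1$, the solution $\bar x$ is feasible. Because all edge weights are nonnegative and each objective is a nonnegative linear (or min-of-linear) functional, componentwise monotonicity gives that $\bar x$ simultaneously attains operator value $\ge\alpha\,\OPT_1$, offline-fairness value $\ge\beta\,\OPT_2$ (the scalar $\beta$ pulls out of the $\min_g$), and online-fairness value $\ge\gamma\,\OPT_3$.

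The core step is the online rounding. When a type-$v$ vertex arrives in round $t$ I would sample a neighbor via edge $e=(u,v)$ with conditional probability $\bar x_{e,t}/\pvt$ (so the unconditional proposal probability on $e$ is $\bar x_{e,t}$), and, if $u$ is still free, commit to the match only with an attenuation factor $\theta_{u,t}=1/(2 g_{u,t})$, where $g_{u,t}$ is the precomputed probability that $u$ is free at the start of round $t$. Proposals depend only on fresh arrival and sampling randomness, hence are independent of $u$'s availability, so the per-round match probability of $u$ is $g_{u,t}\,(\sum_{e\in E_u}\bar x_{e,t})\,\theta_{u,t}=\tfrac12\sum_{e\in E_u}\bar x_{e,t}$, yielding the recursion $g_{u,t+1}=g_{u,t}-\tfrac12\sum_{e\in E_u}\bar x_{e,t}$ with $g_{u,1}=1$. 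The offline constraint $\sum_{t}\sum_{e\in E_u}\bar x_{e,t}\le 1$ then forces $g_{u,t}\ge\tfrac12$ for every $t$, which simultaneously certifies $\theta_{u,t}\le 1$ (so the scheme is well defined) and gives $\Pr[e\text{ matched at }t]=g_{u,t}\,\bar x_{e,t}\,\theta_{u,t}=\tfrac12\bar x_{e,t}$. I expect this availability bound $g_{u,t}\ge\tfrac12$ to be the main obstacle, and it is also the place where $p_e=1$ is essential: with stochastic success the per-round dynamics couple the patience budgets of the two sides, and one can no longer hold the free-probability at exactly the target value.

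Finally, by linearity of expectation every objective of $\ALG$ equals $\tfrac12$ times the corresponding objective of $\bar x$ (for the two fairness objectives the $\min_g$ and the constant factor $\tfrac12$ commute), so $\ALG_{\text{profit}}\ge\tfrac{\alpha}{2}\OPT_1$, $\ALG_{\text{offline}}\ge\tfrac{\beta}{2}\OPT_2$, and $\ALG_{\text{online}}\ge\tfrac{\gamma}{2}\OPT_3$, establishing the claimed $(\tfrac{\alpha}{2},\tfrac{\beta}{2},\tfrac{\gamma}{2})$ competitive ratio simultaneously.
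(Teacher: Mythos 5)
Your proposal is correct and takes essentially the same route as the paper: the same benchmark LPs, a proposal probability of $\bar x_{e,t}/\pvt$ attenuated by a factor $\lambda/\rho_{e,t}$ with $\lambda=\tfrac12$ (your convex combination $\alpha x^{(1)}+\beta x^{(2)}+\gamma x^{(3)}$ with attenuation $1/(2g_{u,t})$ induces exactly the same edge marginals as the three-way $\alpha,\beta,\gamma$ randomization in $\tsfkad$), and the same key availability bound $g_{u,t}\ge\tfrac12$ from the offline constraint, which is the paper's validity lemma for $\lambda=\tfrac12$ (Lemma~\ref{kad_alg_valid}). The only cosmetic differences are that you compute the free probabilities $g_{u,t}$ by an exact closed-form recursion where the paper estimates $\rho_{e,t}$ by Monte Carlo simulation, and you justify the benchmark directly via feasibility of any algorithm's marginals rather than via the realization-LP argument of Lemma~\ref{valid_kad}; both give $\Pr[e\text{ matched at }t]=\tfrac12\bar x_{e,t}$ and the claimed ratios by linearity of expectation.
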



Moreover, for the case of individual fairness whether in the \textbf{KIID} or \textbf{KAD} arrival setting we have: 
\begin{theorem} \label{main_th_indiv}
For the \textbf{KIID} or \textbf{KAD} setting, we can achieve a competitive ratio of $(\frac{\alpha}{2},\frac{\beta}{2},\frac{\gamma}{2})$ simultaneously over the operator's profit, the individual fairness objective for the offline side, and the individual fairness objective for the online side, where $\alpha,\beta,\gamma>0$ and $\alpha+\beta+\gamma \leq 1$.  
\end{theorem}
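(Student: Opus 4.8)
The plan is to \emph{reduce} individual fairness to group fairness and then invoke Theorem~\ref{main_th_kad} on a suitably modified instance. For the offline side this reduction is immediate: individual fairness is, by definition, the group-fairness objective obtained by placing every static vertex $\lv$ in its own singleton group. Since $|\ls(g)|=1$ for each such group, the offline group-fairness ratio $\min_{g}\frac{1}{|\ls(g)|}\E[\sum_{\lv \in \ls(g)}\sum_{e\in \mathcal{M}_{\lv}}\wtl]$ collapses exactly to $\min_{\lv}\E[\sum_{e\in \mathcal{M}_{\lv}}\wtl]$, the offline individual-fairness objective, so no modification of the static side is needed.

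The online side is the substantive part. Here I would build a modified \textbf{KAD} instance by \emph{splitting each online type by round}: replace every type $\rv$ with $T$ distinct types $\rv^{(1)},\dots,\rv^{(T)}$, where $\rv^{(t)}$ carries the same incident edges and weights as $\rv$ but arrives only in round $t$, with probability $p_{\rv^{(t)},s}=\pvt\cdot \mathbf{1}[s=t]$. Because $\sum_{\rv}\pvt=1$ in every round, this is a legitimate \textbf{KAD} arrival distribution; in the \textbf{KIID} case (where $\pvt=\pv$) it is still a valid---though now non-stationary---\textbf{KAD} instance. Now define, for each round $t$, the online group $g_t=\{\rv^{(t)}:\rv\in \rs\}$. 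Its expected number of arrivals is $\sum_{\rv}n_{\rv^{(t)}}=\sum_{\rv}\pvt=1$, so the online group-fairness objective for $g_t$ equals $\E[\sum_{e\in\mathcal{M}_{\rv_t}}\wtr]$, which is precisely the per-round individual-fairness term. Thus online group fairness on the modified instance coincides, round by round, with online individual fairness on the original.

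Next I would argue that the reduction is objective-preserving, so the competitive-ratio guarantee of Theorem~\ref{main_th_kad} transfers. Because the arrivals in the modified instance are merely relabeled copies of the original arrivals (each realized round-$t$ arrival of original type $\rv$ becomes an arrival of $\rv^{(t)}$), there is a value-preserving bijection between (randomized) matching policies on the two instances: the operator-profit objective $\E[\sum_{e\in\mathcal{M}}\wto]$ and the offline objectives are literally unchanged, and the online objectives correspond as established above. Under the standing assumption $p_e=1$ the matching dynamics are identical as well---a static vertex leaves after its single successful match in both instances---so $\OPT$ for each of the three objectives is the same on both instances. Running $\tsfkad(\alpha,\beta,\gamma)$ on the modified instance therefore yields, by Theorem~\ref{main_th_kad}, a simultaneous $(\frac{\alpha}{2},\frac{\beta}{2},\frac{\gamma}{2})$ guarantee for operator profit, offline group (= individual) fairness, and online group (= individual) fairness, which pulls back to the same guarantee on the original instance.

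The main obstacle---and the reason the theorem yields the \textbf{KAD} constant $\frac{1}{2}$ rather than the \textbf{KIID} constant $\frac{1}{2e}$ even in the \textbf{KIID} setting---is that the splitting step necessarily turns a stationary instance into a non-stationary \textbf{KAD} one, so it is the \textbf{KAD} algorithm of Theorem~\ref{main_th_kad} that must be applied throughout; one cannot instead route the offline side through the \textbf{KIID} algorithm of Theorem~\ref{main_th_kiid} without losing the factor $e$. The remaining care is in verifying that the per-round group denominators collapse to exactly $1$ and that $\OPT$ is genuinely preserved under the relabeling, both of which hinge on $\sum_{\rv}\pvt=1$ and on $p_e=1$ keeping the match dynamics unchanged.
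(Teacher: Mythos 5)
Your proposal is correct and follows essentially the same route as the paper: its Lemma~\ref{reduc_indiv_group} performs exactly your reduction---singleton groups $g'_1,\dots,g'_{|U|}$ on the offline side, a time-split online side $V'=V'_1\cup\dots\cup V'_T$ with $p_{v',t}=p_{v,t}$ and $p_{v',\bar t}=0$ for $\bar t\neq t$, and one group $g_t$ per round---after which $\tsfkad(\alpha,\beta,\gamma)$ is applied to the reduced (necessarily \textbf{KAD}, even when the original instance is \textbf{KIID}) instance via Corollary~\ref{corollary_indiv}. The details you verify explicitly, namely that the group denominator collapses to $\sum_{v}p_{v,t}=1$ and that the objectives and $\OPT$ are preserved under the relabeling (using $p_e=1$), are precisely the steps the paper compresses into ``it is not difficult to see.''
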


We also give the following hardness results. In particular, for a given arrival (\textbf{KIID} or \textbf{KAD}) setting and fairness criterion (group or individual), the competitive ratios for all sides cannot exceed 1 simultaneously:  
\begin{restatable}{theorem}{Hardnessth}\label{Hardness_th}
For all arrival models, given the three objectives: operator's profit, offline side group (individual) fairness, and online side group (individual) fairness. No algorithm can achieve a competitive ratio of $(\alpha,\beta,\gamma)$ over the three objectives simultaneously such that  $\alpha+\beta+\gamma > 1$. 
\end{restatable}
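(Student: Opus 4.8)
The plan is to prove the contrapositive by exhibiting, for each (arrival model, fairness notion) pair, a single worst-case instance. Since the competitive ratio of any fixed algorithm $\ALG$ is a minimum over instances, it suffices to construct one instance $I_0$ on which, for every (randomized) $\ALG$, the three per-objective ratios satisfy $\frac{\ALG_{\text{prof}}(I_0)}{\OPT_{\text{prof}}(I_0)} + \frac{\ALG_{\text{off}}(I_0)}{\OPT_{\text{off}}(I_0)} + \frac{\ALG_{\text{on}}(I_0)}{\OPT_{\text{on}}(I_0)} \le 1$, where the three numerators are the operator-profit, offline-fairness, and online-fairness values of $\ALG$ on $I_0$. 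Because each of $\alpha,\beta,\gamma$ (the min over instances) is at most its value on $I_0$, we get $\alpha+\beta+\gamma\le 1$, contradicting any claim of $\alpha+\beta+\gamma>1$. Every instance will be engineered so that each objective draws all of its utility from a distinct matching, forcing the single unit of matching capacity at each vertex to be apportioned among the three goals — a ``conservation of matching'' argument.

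For the one-group (and generally one-sided-$\min$) cases I would first dispatch an easy gadget: a single round with one online type $\rv$ of patience $\pr=1$ (and $p_e=1$) adjacent to three offline vertices $u_1,u_2,u_3$, where edge $e_i=(u_i,\rv)$ carries weight only for the $i$-th entity, namely $\wto=1$ on $e_1$, $\wtl=1$ on $e_2$, $\wtr=1$ on $e_3$, and all other weights zero (single groups on each side). Each $\OPT$ then equals the value obtained by matching its own dedicated edge, and writing $x_e$ for the marginal probability that $\ALG$ matches $e$, the three ratios are exactly $x_{e_1}$, $x_{e_2}$, $x_{e_3}$, whose sum is bounded by the capacity constraint $\sum_i x_{e_i}\le 1$ at $\rv$, giving the claim.

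The crux is the genuinely two-sided individual-fairness case, where this gadget fails: leaving any individual unmatched drives that side's $\min$, and hence its $\OPT$, to zero, so the ratio is ill-defined and no trade-off is exposed. To overcome this I would use a balanced instance whose optimum matches every individual: take $K_{3,3}$ on $\{u_1,u_2,u_3\}$ with three online arrivals, and decompose its nine edges via a Latin square into three perfect matchings $M_{\text{prof}},M_{\text{off}},M_{\text{on}}$ (one per objective), placing unit $\wto$ on $M_{\text{prof}}$, unit $\wtl$ on $M_{\text{off}}$, and unit $\wtr$ on $M_{\text{on}}$, with singleton groups on both sides (so group and individual fairness coincide) and each $\nv=1$. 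Then $\OPT_{\text{prof}}=3$ and $\OPT_{\text{off}}=\OPT_{\text{on}}=1$, each realized by the matching bearing its weight. Letting $a_i,b_i,d_i$ be the marginals of the $M_{\text{prof}},M_{\text{off}},M_{\text{on}}$ edges incident to $u_i$, the profit ratio is $\tfrac13\sum_i a_i$, while by $\min\le\text{average}$ the two fairness ratios are at most $\min_i b_i\le\tfrac13\sum_i b_i$ and $\min_i d_i\le\tfrac13\sum_i d_i$; summing and invoking the per-vertex capacity $a_i+b_i+d_i\le1$ yields a ratio-sum of at most $\tfrac13\sum_i(a_i+b_i+d_i)\le 1$.

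The main obstacle I anticipate is exactly this individual-fairness min-collapse, together with making the argument airtight in the online model: I must confirm that the achievable marginal vectors $\{x_e\}$ of any adaptive online algorithm lie in the bipartite matching polytope, so that only the per-vertex capacity inequalities — which hold in every realization — are needed, and that $\OPT$ is the offline/clairvoyant optimum of each sampled instance (hence our bound dominates even the most powerful algorithm). The remaining care is realizing each instance in both regimes: the deterministic per-round arrivals above are directly a \KAD{} instance, while for \KIID{} I would use stationary arrivals (e.g.\ each type with probability $1/3$ over three rounds so that $\nv=1$) and again bound each online fairness $\min$ by the corresponding expected total utility via $\min\le\text{average}$, after which the same capacity accounting closes the proof.
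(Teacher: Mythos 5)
Your core construction is the paper's own: the paper likewise uses $K_{3,3}$ with the nine edges split by a Latin square into three disjoint perfect matchings carrying the operator, offline, and online weights respectively, bounds each fairness ratio via $\min \le \text{average}$, and closes with the per-offline-vertex capacity $\Delta_u = 1$ (the ``conservation of matching'' accounting is identical, down to the $a_i, b_i, c_i$ marginals). The paper runs this instance under \KIID{} with $T$ arbitrarily large for the group case, and under \KAD{} with deterministic arrivals ($p_{v_i,i}=1$) for the individual case --- exactly your deterministic-arrival variant. Your single-round star gadget for the group case is a more elementary instance than the paper's, and is correct; since it is a \KIID{} instance it covers \KAD{} as well, just as the paper argues.

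There is, however, one concrete failure: your \KIID{} instance for \emph{individual} fairness with $T=3$ and $p_v = 1/3$ does not satisfy the claimed inequality, because arrival collisions simultaneously depress the $\OPT$ denominators and create ``free'' capacity for secondary objectives. On that instance $\OPT_{\mathrm{prof}} = \E[\#\text{distinct types}] = 19/9$ and $\OPT_{\mathrm{off}} = 1-(2/3)^3 = 19/27$, and the following algorithm beats ratio-sum $1$: match the first copy of each arriving type along its profit ($M_{\mathrm{prof}}$) edge; if the round-$3$ arrival is a repeat copy of the type whose $M_{\mathrm{off}}$ partner is $u_i$, match it to $u_i$ (which is then free, since its profit type never arrived). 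This leaves the profit untouched, so $\ALG_{\mathrm{prof}}/\OPT_{\mathrm{prof}} = 1$, while each $u_i$ gains expected utility $1/9$ (the sequences $jjj, jkj, kjj$ out of $27$), giving
\begin{equation*}
\frac{\ALG_{\mathrm{prof}}}{\OPT_{\mathrm{prof}}} + \frac{\ALG_{\mathrm{off}}}{\OPT_{\mathrm{off}}} \ge 1 + \frac{1/9}{19/27} = 1 + \frac{3}{19} > 1.
\end{equation*}
Your ``same capacity accounting'' silently assumed the deterministic-arrival values $\OPT_{\mathrm{prof}}=3$, $\OPT_{\mathrm{off}}=\OPT_{\mathrm{on}}=1$; once the denominators shrink to $19/9$ and $19/27$, the coefficient in front of $\sum_i b_i$ becomes $9/19 > 1/3$ and the budget argument no longer closes. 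The slack vanishes only as $T \to \infty$ (the paper's choice for the group case), but then $\OPT_{\mathrm{on}} \to 3/T$ and you would additionally need a matching lower-bound construction for that vanishing denominator. The paper sidesteps all of this by proving the individual case only on the \KAD{} instance with deterministic arrivals; if you want your proof to match the theorem's scope as the paper interprets it, either do the same, or redo the \KIID{}-individual case with $T \to \infty$ and an explicit estimate of $\OPT_{\mathrm{on}}$.
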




It is natural to wonder if we can combine individual and group fairness. Though it is possible to extend our algorithms to this setting. The follow theorem shows that they can conflict with one another: 
\begin{restatable}{theorem}{Hardnessthindivgroup}\label{Hardness_th_indiv_group}
Ignoring the operator's profit and focusing either on the offline side alone or the online side alone. With $\ag$ and $\ai$ denoting the group and individual fairness competitive ratios, respectively. No algorithm can achieve competitive ratios $(\ag,\ai)$ over the group and individual fairness objectives of one side simultaneously such that  $\ag+\ai> 1$. 
\end{restatable}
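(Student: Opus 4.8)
The plan is to prove the bound by exhibiting a single hard instance together with a limiting argument; by the symmetry of the statement it suffices to treat one side, say the offline side, and I will indicate the mirror-image online construction at the end. Since the operator's profit is ignored, the entire difficulty is to force the group-fairness-optimal and the individual-fairness-optimal allocations to pull in \emph{opposite} directions while competing for the same scarce matching capacity. The key design insight is that this opposition has to be created with edge weights rather than with group sizes: if all weights are equal, the uniform allocation is simultaneously optimal for both objectives (it equalizes the per-vertex utilities and the per-group averages at once), so both ratios can be $1$ together and no trade-off exists.

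Concretely, I would take a single group $g=\{u_1,u_2\}$ on the offline side and one online type $\rv$ adjacent to both, where $\rv$ arrives once and all success probabilities equal $1$, so that the arrival of $\rv$ produces exactly one successful match that must go to either $u_1$ or $u_2$. I set the edge weights to $\wtl=W$ on $(u_1,\rv)$ and $\wtl=1$ on $(u_2,\rv)$ for a large parameter $W$. Writing $q$ for the expected fraction of the match assigned to $u_1$, every algorithm induces $\E[\mathrm{util}(u_1)]=Wq$ and $\E[\mathrm{util}(u_2)]=1-q$, so the group objective is $\tfrac12(Wq+1-q)$ and the individual objective is $\min(Wq,\,1-q)$. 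I then compute the clairvoyant (LP / max-min) optima: group fairness is maximized by putting all mass on the heavy vertex, giving $\OPT_G=W/2$, whereas individual fairness is maximized by the balancing choice $Wq=1-q$, i.e. $q=1/(W+1)$, giving $\OPT_I=W/(W+1)$. These two optimizers sit at opposite ends of the $q$-interval, which is the quantitative form of the required conflict.

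The remaining step is a one-variable estimate. Dividing the achieved objectives by the two optima and summing, I would show that $\ag(q)+\ai(q)\le 1+\tfrac{2}{W+1}$ for every $q$, with the maximum attained exactly at the balancing $q=1/(W+1)$. Since the competitive ratios are minima over instances, any algorithm playing some $q^{\star}$ on this instance has $\ag+\ai$ no larger than the ratio it achieves here, hence $\ag+\ai\le 1+\tfrac{2}{W+1}$; letting $W\to\infty$ forces $\ag+\ai\le 1$, which is the claim. The mirror construction for the online side uses one offline vertex $\lv$ and two rounds in which a heavy type (weight $W$) and a light type (weight $1$), each adjacent only to $\lv$, arrive; because $\lv$ can be matched only once, the per-round individual-fairness objective and the single-group online-fairness objective reduce to exactly the same $\min(Wq,\,1-q)$ versus $\tfrac12(Wq+1-q)$ trade-off.

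The step I expect to be the main obstacle is not the final calculation but keeping the construction honest. First, one must ensure the match is genuinely exclusive, so that the masses devoted to the two objectives are forced to sum to at most one; this is what makes the trade-off \emph{additive} rather than mere simultaneous degradation (whenever two arrivals let an online algorithm cover both vertices, the normalized sum stays near $3/2$ regardless of the strategy). Second, one must fix the benchmark: with a single integral match the per-realization optimum of a $\min$-objective is $0$, so $\OPT_I$ must be the standard LP / randomized clairvoyant optimum $W/(W+1)$ used throughout this line of work, and stating this explicitly is essential for the ratios to be well defined. Finally, the residual $\tfrac{2}{W+1}$ is unavoidable for any finite $W$, since matching the light vertex to help individual fairness always contributes a little to the group average; the bound therefore genuinely requires the limit $W\to\infty$ rather than a single fixed instance.
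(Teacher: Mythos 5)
Your proposal is correct and matches the paper's own proof in all essentials: the same two-offline-vertex, one-type instance with weights $\{1,W\}$ (the paper's $L$), the same optima $\OPT_G=W$ up to the group-size normalization (which cancels in the ratio) and $\OPT_I=\tfrac{W}{W+1}$, the same one-variable bound of the form $1+O(1/W)$ on the summed ratios, the limit $W\to\infty$, and a mirrored construction for the online side. The only (immaterial) deviations are that the paper's online mirror uses two types arriving with equal probability over a large horizon rather than your two deterministic rounds, and that your explicit remark about using the expected (LP/clairvoyant) benchmark for the $\min$-objective is left implicit in the paper.
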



Finally, we carry experiments on real-world datasets in Section \ref{sec:experiments}.
\vspace{-0.3cm}
\section{Algorithms and Theoretical Guarantees}
Our algorithms use linear programming (LP) based techniques \cite{bansal2010lp,nanda2019balancing,xutrade,brubach2016online} where first a \emph{benchmark} LP is set up to upper bound the optimal value of the problem, then an LP solution is sampled from to produce an algorithm with guarantees. Due to space constraints, all proofs and the technical details of Theorems (\ref{Hardness_th} and \ref{Hardness_th_indiv_group}) are in Appendix (\ref{app:proofs}). 
\vspace{-0.1cm}
\subsection{Group Fairness for the \textbf{KIID} Setting:}\label{sec:kiid}
Before we discuss the details of the algorithm, we note that for a given vertex type $v \in V$, the expected arrival rate $n_v$ could be greater than one. However, it is not difficult to modify the instance by ``fragmenting" each type with $n_v>1$ such that in the new instance $n_v=1$ for each type. This can be done with the operator's profit, offline group fairness, and online group fairness having the same values. Therefore, in what remains for the \textbf{KIID} setting $n_v=1, \forall v \in V$ and therefore for any round $t$, each vertex $v$ arrives with probability $\frac{1}{T}$. It also follows that for a given group $g$, $\sum_{v \in V(g)} n_v = \sum_{v \in V(g)} 1=|V(g)|$.      

For each edge $e=(u,v) \in E$ we use $x_e$ to denote the expected number of probes (i.e, assignments from $u$ to type $v$ not necessarily successful) made to edge $e$ in the LP benchmark. We have a total of three LPs each having the same set of constraints of (\ref{constraints}), but differing by the objective. For compactness we do not repeat these constraints and instead write them once. Specifically, LP objective (\ref{sys_obj}) along with the constraints of (\ref{constraints}) give the optimal benchmark value of the operator's profit. Similarly, with the same set of constraints (\ref{constraints}) LP objective (\ref{driver_obj}) and LP objective (\ref{rider_obj}) give the optimal group max-min fair assignment for the offline and online sides, respectively. Note that the expected max-min objectives of (\ref{driver_obj}) and (\ref{rider_obj}), can be written in the form of a linear objective. For example, the max-min objective of (\ref{driver_obj}) can be replaced with an LP with objective $\max{\eta}$ subject to the additional constraints that $\forall g\in \grps$ ,  $ \eta \leq  \frac{\sum_{u \in U(g)} \sum_{e \in E_u} \wtl x_{e}p_{e}}{|U(g)|}$. Having introduced the LPs, we will use LP(\ref{sys_obj}), LP(\ref{driver_obj}), and LP(\ref{rider_obj}) to refer to the platform's profit LP, the offline side group fairness LP, and the online side group fairness LP, respectively. 

{\small
\begin{align}
    & \textstyle\max \sum_{e \in E}{\wto x_{e}} p_{e} \label{sys_obj} \\
    & \textstyle\max \min\limits_{g \in \grps}  \frac{\sum_{u \in U(g)} \sum_{e \in E_u} \wtl x_{e}p_{e}}{|U(g)|} \label{driver_obj} \\
    & \textstyle\max \min\limits_{g \in \grps}  \frac{\sum_{v \in V(g)} \sum_{e \in E_v} \wtr x_{e}p_{e}}{|V(g)|} \label{rider_obj}  
    \end{align}
    \vspace{-0.5cm}
    \begin{subequations}\label{constraints}
    \begin{align}
     &  \text{s.t} \quad  \forall e \in E:  0 \leq x_{e} \leq 1 \label{constraint : assignment } \\
    &  \textstyle\forall u \in U: \sum_{e \in E_u} x_{e}p_{e} \leq 1  \label{constraint: capacity} \\
    &  \textstyle\forall u \in U: \sum_{e \in E_u} x_{e} \leq \Delta_u  \label{constraint: cancellation} \\ 
    &  \textstyle\forall v \in V: \sum_{e \in E_v} x_{e} p_{e} \leq 1  \label{constraint : arrivalrate} \\
    &  \textstyle\forall v \in V:  \sum_{e \in E_v}x_{e}  \leq \Delta_v   \label{constraint : patience} 
    \end{align}
   \end{subequations}
}
\vspace{-0.37cm}

Now we prove that LP(\ref{sys_obj}), LP(\ref{driver_obj}) and LP(\ref{rider_obj}) indeed provide valid upper bounds (benchmarks) for the optimal solution for the operator's profit and expected max-min fairness for the offline and online sides of the matching.

\begin{restatable}{lemma}{validkiid} \label{valid_kiid}
For the \textbf{KIID} setting, the optimal solutions of LP (\ref{sys_obj}), LP (\ref{driver_obj}), and LP (\ref{rider_obj}) are upper bounds on the expected optimal that can be achieved by any algorithm for the operator's profit, the offline side group fairness objective, and the online side group fairness objective, respectively.  
\end{restatable}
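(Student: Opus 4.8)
The plan is to use the standard LP-benchmark argument: for each of the three objectives I would take an arbitrary algorithm (in particular the optimal one), extract from its execution a fractional point $\{x_e\}_{e \in E}$, verify that this point is feasible for the constraints (\ref{constraints}), and then argue that the algorithm's expected objective value is exactly the corresponding LP objective evaluated at $\{x_e\}$. Since each LP maximizes its objective over the feasible region, this certifies that LP(\ref{sys_obj}), LP(\ref{driver_obj}), and LP(\ref{rider_obj}) upper bound the expected value attainable by any algorithm for the operator's profit, the offline group fairness objective, and the online group fairness objective, respectively. Concretely, I would set $x_e$ to be the expected number of probes placed along edge $e=(\lv,\rv)$ over the $T$ rounds, where the expectation is taken over the random arrival sequence, the algorithm's internal coins, and the success/failure coins. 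The single probabilistic fact I would lean on throughout is that the coin deciding whether a given probe along $e$ succeeds is fresh, i.e. independent of the entire history that determines whether and when that probe is made; conditioning probe-by-probe and applying the tower rule then yields that the expected number of \emph{successful} matches along $e$ equals $p_e x_e$.

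To establish feasibility I would check the five constraints one at a time, each by a deterministic bound on a sample path followed by taking expectations (recalling $\nv=1$ after fragmenting). For (\ref{constraint : assignment }): on every sample path the number of probes along $e$ is at most the number of arrivals of type $\rv$, since $\rv$ probes each neighbour at most once per arrival, so $x_e \le \nv = 1$. For (\ref{constraint: capacity}): $\lv$ is matched successfully at most once, so the number of successful matches incident on $\lv$ is at most $1$ on every path, and taking expectations together with the probe/success identity gives $\sum_{e \in E_\lv} x_e p_e \le 1$. For (\ref{constraint: cancellation}): the patience of $\lv$ caps its total number of probes (failed or successful) at $\pl$ over the horizon, so $\sum_{e \in E_\lv} x_e \le \pl$. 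For (\ref{constraint : arrivalrate}): each arrival of $\rv$ is matched successfully at most once, so the number of successful matches of type $\rv$ is at most the number of arrivals, and in expectation $\sum_{e \in E_\rv} x_e p_e \le \nv = 1$. Finally, for (\ref{constraint : patience}): the per-round patience caps the probes of each arrival of $\rv$ at $\pr$, so the total number of probes to $\rv$ is at most $\pr$ times the number of arrivals, whence $\sum_{e \in E_\rv} x_e \le \pr \nv = \pr$.

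For the objective matching I would use linearity of expectation together with the probe/success identity. The expected operator profit $\E[\sum_{e \in \mathcal{M}} \wto]$ equals $\sum_{e \in E} \wto x_e p_e$, which is exactly the objective (\ref{sys_obj}) at $\{x_e\}$. For the offline side, the expected utility accrued by a group $g$ is $\sum_{\lv \in \ls(g)} \sum_{e \in E_\lv} \wtl x_e p_e$, so the algorithm's fairness value $\min_{g} \frac{\E[\cdots]}{|\ls(g)|}$ coincides term-by-term with the objective (\ref{driver_obj}) at $\{x_e\}$; note that both sides are a minimum of expectations, so the two quantities are equal and no convexity gap arises. The online side is identical, using $\wtr$, $\rs(g)$, and $\sum_{\rv \in \rs(g)} \nv = |\rs(g)|$. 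Since $\{x_e\}$ is feasible and each LP is a maximization, its optimal value is at least the algorithm's expected objective; applying this to the optimal algorithm for each of the three objectives yields the claim.

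The step I expect to be the main obstacle is the probe/success identity $\E[\#\text{successes along } e] = p_e x_e$ that underlies every constraint and every objective computation. The care required is that the event ``a probe is placed along $e$'' is history-dependent, whereas the success coin is independent of that history; making this rigorous (for instance by indexing probes in their order of occurrence and invoking the tower rule in an optional-stopping style) is the only non-routine part. The remaining constraint verifications are straightforward sample-path counting combined with the normalization $\nv=1$.
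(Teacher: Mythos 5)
Your proposal is correct for what it literally bounds, and it takes a genuinely different route from the paper's. You argue primally: the vector of expected probe counts of an arbitrary strategy is feasible for the constraints (\ref{constraints}) by sample-path counting plus the fresh-coin identity $\E[\#\text{successes on } e]=p_e x_e$, and the strategy's objective value equals the corresponding LP objective at that vector, so the LP optimum dominates. The paper instead works per realization: it writes down the LP for each realized graph $G'$ (asserting that this realization LP upper bounds the optimum of that realization), and then proves $\EI[LP(G)] \le LP(\EI[G])$ by taking the optimal dual solution of the expected-instance LP, replicating its dual variables across all arrivals of each online type to obtain a feasible dual for every realization, and invoking weak duality together with linearity of expectation. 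Your route avoids duality entirely and makes rigorous the probe/success independence that the paper compresses into an ``it is clear'' step; the paper's route buys something your argument does not automatically deliver, namely an upper bound on the \emph{expectation of the per-realization optimum}.

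That difference matters because of how the paper defines its benchmark: $\OPT(\mathcal I)=\EI[\OPT(I)]$, the expectation of the optimal value of each sampled instance, and this is the quantity the competitive ratios of Theorem \ref{main_th_kiid} are measured against. Your remark that ``no convexity gap arises'' is true only for the min-of-expectations value $\min_g \E[\text{util}_g]/|U(g)|$ attained by a single fixed strategy; the benchmark $\EI[\OPT(I)]$ puts the minimum \emph{inside} the expectation, and your argument as written does not bound it (it also quietly restricts attention to strategies that do not see the realization, whereas $\OPT(I)$ adapts to $I$). Fortunately the patch is short and keeps your proof duality-free: apply your feasibility argument to the arrival-clairvoyant meta-strategy that, given the realization $I$, plays the optimal probing strategy for $I$ --- the sample-path constraint checks and the fresh-coin identity survive clairvoyance of arrivals (though not of the success coins) --- and then use $\EI[\min_g(\cdot)] \le \min_g \EI[(\cdot)]$, which goes in the favorable direction, giving $\EI[\OPT(I)] \le \min_g \EI\E[\text{util}_g]/|U(g)| \le$ the LP optimum. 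With those two additions your primal argument yields exactly the conclusion the paper obtains by duality.
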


Our algorithm makes use of the dependent rounding subroutine \cite{gandhi2006dependent}. We mention the main properties of dependent rounding. In particular, given a fractional vector $\vec{x} = (x_1,x_2, \dots, x_t)$ where each $x_i \in [0,1]$, let $k= \sum_{i \in [t]}x_i$ , dependent rounding rounds $x_i$ (possibly fractional) to $X_i \in \{0,1\} $ for each $i \in [t]$ such that the resulting vector $\vec{X}= (X_1,X_2, X_3, \dots ,X_t)$ has the following properties: (1) \textbf{Marginal Distribution}: The probability that $X_i=1$ is equal to $x_i$, i.e. $Pr [X_i=1] = x_i$ for each $i \in [t]$. (2) \textbf{Degree Preservation}: Sum of $X_i$'s should be equal to either $\floor{k}$ or $\ceil{k}$ with probability one, i.e. $Pr[\sum_{i \in [t]}X_i \in \{ \floor{k}, \ceil{k} \}] =1$. (3) \textbf{Negative Correlation}: For any $S \subseteq [t]$, (1) $Pr[ \land_{i \in S}X_i =0 ] \leq  \Pi_{i\in S}Pr[ X_i =0 ] $ (2) $Pr[ \land_{i \in S}X_i =1 ] \leq  \Pi_{i\in S}Pr[X_i =1]$. It follows that for any $x_i,x_j \in \vec{x}$, $\mathbb{E}[X_i=1|X_j=1] \leq x_i$.



Going back to the LPs (\ref{sys_obj},\ref{driver_obj},\ref{rider_obj}), we denote the optimal solutions to LP (\ref{sys_obj}), LP (\ref{driver_obj}), and LP (\ref{rider_obj}) by $\vec{x}^*$,$\vec{y}^*$ and $\vec{z}^*$ respectively. Further, we introduce the parameters $\alpha,\beta,\gamma \in [0,1]$ where $\alpha+ \beta+ \gamma \leq 1$ and each of these parameters decide the "weight" the algorithm places on each objective (the operator's profit, the offline group fairness, and the online group fairness objectives). We note that our algorithm makes use of the subroutine \PPDR \ (Probe with Permuted Dependent Rounding) \ shown in Algorithm \ref{alg:alg_sr}. 

\begin{algorithm}[h!]
   \caption{\PPDR($\vec{x}_v$) }
   \label{alg:alg_sr}
\begin{algorithmic}[1]
   \STATE  Apply dependent rounding to the fractional solution $\vec{x}_v$ to get a binary vector $\vec{X}_v$.
   \STATE  Choose a random permutation $\pi$ over the set $E_v$.
   \FOR{$i=1$ to $|E_v|$}
     \STATE Probe vertex $\pi(i)$ if it is available and $\vec{X}_v(\pi(i))=1$ 
      \IF{Probe is successful (i.e., a match)}
     \STATE break  
    \ENDIF
   \ENDFOR
\end{algorithmic}
\end{algorithm}

The procedure of our parameterized sampling algorithm $\tsf$ is shown in Algorithm \ref{alg:alg_intcap}. Specifically, when a vertex of type $v$ arrives at any time step we run \PPDR\big($\vec{x}_{v}^*$\big), \PPDR\big($\vec{y}_{v}^*$\big), or \PPDR\big($\vec{z}_{v}^*$\big) with probabilities $\alpha$, $\beta$, and $\gamma$, respectively. We do not run any of the \PPDR \ subroutines and instead reject the vertex with probability $1-(\alpha+\beta+\gamma)$. The LP constraint (\ref{constraint : patience}) guarantees that $\forall v \in V: \sum_{e \in E_r}s^{*}_{e} \leq \Delta_v$ where $\vec{s}^{*}$ could be $\vec{x}^{*},\vec{y}^{*},\text{or } \vec{z}^{*}$. Therefore, when \PPDR \ is invoked by the \textbf{degree preservation} property of dependent rounding the number of edges probed will not exceed $\Delta_v$, i.e. it would be within the patience limit. 

\begin{algorithm}[h!]
   \caption{$\tsf(\alpha,\beta,\gamma$) }
   \label{alg:alg_intcap}
\begin{algorithmic}[1]
   \STATE Let $v$ be the vertex type arriving at time $t$.
   \STATE With probability $\alpha$ run the subroutine, \PPDR\big($\vec{x}_{v}^*$\big). 
   \STATE With probability $\beta$ run the subroutine, \PPDR\big($\vec{y}_{v}^*$\big). 
   \STATE With probability $\gamma$ run the subroutine, \PPDR\big($\vec{z}_{v}^*$\big). 
   \STATE Reject the arriving vertex with probability $1- (\alpha+\beta+\gamma)$.
\end{algorithmic}
\end{algorithm}

Now we analyze $\tsf$ to prove Theorem \ref{main_th_kiid}. It would suffice to prove that for each edge $e$ the expected number of successful probes is at least $\alpha \frac{x^*_{e}}{2e}$,  $\beta \frac{y^*_{e}}{2e}$ and $\gamma \frac{z^*_{e}}{2e}$. And finally from the linearity of expectation we show that the worst case competitive ratio of the proposed online algorithm with parameters $\alpha,\beta$ and $\gamma$  is at least $(\frac{\alpha}{2e},\frac{\beta}{2e},\frac{\gamma}{2e})$ for the operator's profit and group fairness objectives on the offline and online sides of the matching, respectively.



A critical step is to lower bound the probability that a vertex $u$ is available (safe) at the beginning of round  $t \in [T]$. Let us denote the indicator random variable for that event by $SF_{u,t}$. The following lemma enables us to lower bound for the probability of $SF_{u,t}$.

\vspace{-0.2cm}
\begin{restatable}{lemma}{lemmatolerance}\label{lemma:available-tolerance}
$Pr[SF_{u,t}]  \geq \Big(1- \frac{t-1}{T}\Big)  \Big( 1-  \frac{1}{T}  \Big)^{t-1}  $. 
\end{restatable}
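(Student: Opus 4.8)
The plan is to observe that $SF_{u,t}$ is the conjunction of two events: that $u$ has \emph{not} been successfully matched during rounds $1,\dots,t-1$, and that $u$ has \emph{not} exhausted its patience, i.e.\ it has received at most $\Delta_u-1$ failed probes. I will lower bound the ``survival'' probability of each event separately, showing that the first contributes the factor $(1-\tfrac1T)^{t-1}$ and the second the factor $(1-\tfrac{t-1}{T})$, and then combine them multiplicatively.

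First I will record the per-round probe estimates that follow from the LP constraints together with the marginal property of dependent rounding. Since $n_v=1$, a type $v$ arrives with probability $\tfrac1T$, and the algorithm then runs \PPDR\ on $\vec{x}^*$, $\vec{y}^*$ or $\vec{z}^*$ with probabilities $\alpha,\beta,\gamma$; hence edge $e=(u,v)$ is rounded to $1$ with probability $w_e:=\alpha x^*_e+\beta y^*_e+\gamma z^*_e$, and $u$ can be probed via $e$ only if $e$ is selected. Consequently, regardless of the history, in any round the probability that $u$ is probed is at most $\tfrac1T\sum_{e\in E_u} w_e\le \tfrac{\Delta_u}{T}$ (using $\sum_{e\in E_u}x_e\le\Delta_u$ and $\alpha+\beta+\gamma\le 1$), and the probability that $u$ is \emph{successfully matched} is at most $\tfrac1T\sum_{e\in E_u} w_e p_e\le \tfrac1T$ (using $\sum_{e\in E_u}x_e p_e\le 1$).

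For the matching event I will apply the chain rule over rounds: conditioned on $u$ not being matched in rounds $1,\dots,s-1$, the probability it is matched in round $s$ is at most $\tfrac1T$ (it is $0$ if $u$ has already left and at most $\tfrac1T$ if $u$ is present), so the probability $u$ is never matched is at least $(1-\tfrac1T)^{t-1}$. For the patience event I will bound the expected number of probes to $u$ over the first $t-1$ rounds by $(t-1)\tfrac{\Delta_u}{T}$ via the per-round estimate, and apply Markov's inequality to get $\Pr[\,u\text{ receives }\ge\Delta_u\text{ failed probes}\,]\le \tfrac{t-1}{T}$, i.e.\ patience survives with probability at least $1-\tfrac{t-1}{T}$; note this step is what eliminates any dependence on $\Delta_u$ from the final bound.

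The main obstacle is combining these into a \emph{product}, since both survival events are measurable with respect to the same probe outcomes and are in fact positively correlated, so a naive split gives only a weaker additive estimate. To handle this cleanly I will introduce, for each round $s$, the ``potential-probe'' indicator $\tilde\eta_s$ that $u$'s edge to the arriving type is selected by dependent rounding, ignoring whether $u$ is actually reached in the permutation; because each round draws fresh arrival and rounding randomness, the $\tilde\eta_s$ are independent across rounds. Coupling the true process to a pessimistic one in which $u$ is probed in every potential-probe round that it survives to (which can only remove $u$ sooner) lets me write the pessimistic survival probability as $\mathbb{E}\big[\mathbf{1}[\textstyle\sum_s\tilde\eta_s<\Delta_u]\cdot\prod_s(1-p_{e_s})^{\tilde\eta_s}\big]$. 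Both the indicator and the no-success product are non-increasing functions of the independent per-round probing variables, so Harris's (FKG) inequality factors the expectation, and the two factors are exactly $1-\tfrac{t-1}{T}$ and $(1-\tfrac1T)^{t-1}$ from the bounds above. The delicate points will be checking that the pessimistic coupling is a valid lower bound and verifying monotonicity for the correlation inequality; the LP-based per-round estimates themselves are routine.
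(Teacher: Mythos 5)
Your proposal is correct, and the step where you combine the two survival events takes a genuinely different route from the paper's. The skeleton is shared: the paper also splits $SF_{u,t}$ into the no-successful-match event $(A_{u,t}=0)$ and the patience event $(B_{u,t}<\Delta_u)$, proves $Pr[A_{u,t}=0]\ge \big(1-\frac{1}{T}\big)^{t-1}$ from the dependent-rounding marginals together with the LP constraints $\sum_{e\in E_u}x_e p_e\le 1$ and $\alpha+\beta+\gamma\le 1$, and obtains the factor $1-\frac{t-1}{T}$ by Markov's inequality applied to the expected probe count, exactly as you do. To handle the correlation between the two events, however, the paper \emph{conditions} rather than decouples: it bounds the per-round conditional probe probability $\mathbb{E}[X_{e,k}Y_{e,k}\mid A_{u,k}=0]=\frac{Pr[X_{e,k}=1,\,Y_{e,k}=1,\,Z_{e,k}=0]}{Pr[A_{u,k}=0]}\le \frac{1}{T}\big(\alpha x^*_e+\beta y^*_e+\gamma z^*_e\big)$, the key point being that a probe consistent with $A_{u,k}=0$ must have \emph{failed}, so the numerator carries the factor $1-p_e$ that absorbs the denominator; conditional Markov then gives $Pr[B_{u,t}<\Delta_u\mid A_{u,t}=0]\ge 1-\frac{t-1}{T}$, and writing $Pr[SF_{u,t}]\ge Pr[B_{u,t}<\Delta_u\mid A_{u,t}=0]\,Pr[A_{u,t}=0]$ finishes. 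Your route---pre-drawing independent potential-probe indicators and success coins, coupling to the pessimistic process in which every potential probe is realized, and factoring $\mathbb{E}\big[\mathbf{1}[\sum_s\tilde\eta_s<\Delta_u]\cdot\prod_s(1-p_{e_s})^{\tilde\eta_s}\big]$ via Harris/FKG---is sound: the coupling is valid because, with shared randomness, the true probes to $u$ form a subset of the potential probes (availability and the permutation break-on-success can only suppress probes), and monotonicity checks out if you totally order each round's state space with ``no potential probe'' on top and the probe states below it ordered by decreasing $1-p_e$, which makes both the patience indicator and the no-success product non-decreasing. As for what each approach buys: the paper's conditional computation is more elementary and avoids any correlation inequality, but hinges on the slightly delicate $1-p_e$ cancellation; your version makes the round-to-round independence structure explicit---indeed your chain-rule/coupling treatment rigorously justifies the product over rounds that the paper writes rather tersely when bounding $Pr[A_{u,t}=0]$, and it upgrades the positive correlation that the paper merely asserts in its final step from an observation to an instance of Harris's inequality.
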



Now that we have established a lower bound on $Pr[SF_{u,t}]$, we lower bound the probability that an edge $e$ is probed by one of the \PPDR \ subroutines conditioned on the fact that $u$ is available (Lemma \ref{ref-lemma}). Let $1_{e,t}$ be the indicator that $e=(u,v)$ is probed by the $\tsf$ Algorithm at time $t$. Note that event $1_{e,t}$ occurs when (1) a vertex of type $v$ arrives at time $t$ and (2) $e$ is sampled by \PPDR($\vec{x_v}$), \PPDR($\vec{y_v}$), or \PPDR($\vec{z_v}$). 
\begin{restatable}{lemma}{lemmaref}\label{ref-lemma}
$ Pr[1_{e,t} \mid SF_{u,t} ] \geq \alpha \frac{x^*_{e}}{2T} $ ,$ Pr[1_{e,t} \mid SF_{u,t} ] \geq \beta \frac{y^*_{e}}{2T} $, $ Pr[1_{e,t} \mid SF_{u,t} ] \geq \gamma \frac{z^*_{e}}{2T} $ 
\end{restatable}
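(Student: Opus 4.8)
The plan is to factor the event $1_{e,t}$, that edge $e=(u,v)$ is probed at round $t$, into successive stages and to lower bound the conditional probability of each stage given $SF_{u,t}$. Conditioned on $SF_{u,t}$, edge $e$ is probed precisely when (i) a type-$v$ vertex arrives at round $t$, (ii) the algorithm selects the relevant \PPDR\ branch, (iii) dependent rounding sets the coordinate of $e$ to one, and (iv) the probing loop actually reaches $e$ before breaking on an earlier success. I would carry out the argument in full for the $\alpha$ (operator) branch; the $\beta$ and $\gamma$ statements then follow verbatim after replacing $\vec{x}^*$ by $\vec{y}^*$ and $\vec{z}^*$ and invoking the same properties.

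First I would dispose of the easy factors. Since after fragmenting $n_v=1$, a type-$v$ vertex arrives at round $t$ with probability exactly $\tfrac1T$, and both this arrival and the round-$t$ rounding/permutation are independent of the history that determines $SF_{u,t}$. Conditioned on the arrival, the algorithm runs \PPDR$(\vec{x}^*_v)$ with probability $\alpha$, and inside that call the Marginal Distribution property of dependent rounding gives that the coordinate of $e$ is rounded to one with probability exactly $x^*_e$. These stages contribute the product $\tfrac1T\cdot\alpha\cdot x^*_e$, leaving only the ``$e$ is reached'' factor to control.

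The crux, and the step I expect to be the main obstacle, is showing that conditioned on $X_e=1$ (and on $SF_{u,t}$) the loop reaches $e$ with probability at least $\tfrac12$. Reaching $e$ fails only if some edge $e'=(u',v)\in E_v$ preceding $e$ in the random permutation $\pi$ is probed successfully, which breaks the loop. I would bound $\Pr[e\text{ not reached}\mid X_e=1]$ by the expected number of such earlier successful probes via Markov's inequality. For each $e'\neq e$: it precedes $e$ with probability $\tfrac12$ by permutation symmetry; conditioned on $X_e=1$ its coordinate equals one with probability at most $x^*_{e'}$ by the Negative Correlation property; its endpoint $u'$ is available with probability at most $1$; and if probed it succeeds with probability $p_{e'}$. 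Summing yields an expected count at most $\tfrac12\sum_{e'\in E_v}x^*_{e'}p_{e'}$, which is at most $\tfrac12$ by the LP arrival-rate constraint~(\ref{constraint : arrivalrate}). Hence $e$ is reached with probability at least $1-\tfrac12=\tfrac12$.

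Finally I would note that because $e$ is the only edge of $E_v$ incident to $u$, conditioning on $SF_{u,t}$ ensures $u$ is still available at the moment $e$ would be probed, so the availability requirement on $u$ costs nothing beyond $SF_{u,t}$ itself. Multiplying the four factors then gives $\Pr[1_{e,t}\mid SF_{u,t}]\geq \tfrac1T\cdot\alpha\cdot x^*_e\cdot\tfrac12=\alpha\frac{x^*_e}{2T}$, and the $\beta$ and $\gamma$ bounds follow identically using $\vec{y}^*$ and $\vec{z}^*$ in place of $\vec{x}^*$.
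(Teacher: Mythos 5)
Your proposal is correct and takes essentially the same route as the paper's proof: both factor $1_{e,t}$ into the arrival probability $\tfrac1T$, the branch choice $\alpha$, the marginal-distribution guarantee $\Pr[X_e=1]=x^*_e$, and the event that the probing loop reaches $e$, and both lower bound this last factor by $\tfrac12$ using Markov's inequality on the number of earlier successful probes, permutation symmetry ($\E[Y_{e'}]=\tfrac12$), the negative-correlation bound $\E[X_{e'}\mid X_e=1]\le x^*_{e'}$, and the LP constraint $\sum_{e'\in E_v}x^*_{e'}p_{e'}\le 1$. The $\beta$ and $\gamma$ cases follow verbatim, exactly as in the paper.
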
 

Given the above lemmas Theorem \ref{main_th_kiid} can be proved. 

\subsection{Group Fairness for the \textbf{KAD} Setting:} \label{sec:kad}

For the \textbf{KAD} setting, the distribution over $V$ is time dependent and hence the probability of sampling a type $v$ in round $t$ is $\pvt \in [0,1]$ with $\sum_{v \in V} \pvt=1$. Further, we assume for the \textbf{KAD} setting that for every edge $e \in E$ we have $p_e=1$. This means that whenever an incoming vertex $v$ is assigned to a safe-to-add vertex $u$ the assignment is successful. This also means that any non-trivial values for the patience parameters $\pl$ and $\pr$ become meaningless and hence we can WLOG assume that $\forall u \in U, \forall v \in V, \pl=\pr=1$. From the above discussion, we have the following LP benchmarks for the operator's profit, the group fairness for the offline side and  the group fairness for the online side: 

\begin{align}
    & \textstyle\max \sum\limits_{t \in [T]} \sum\limits_{e \in E}{\wto \xet }  \label{sys_obj_kad} \\
    & \textstyle\max \min\limits_{g \in \grps}   \frac{\sum\limits_{t \in [T]} \sum\limits_{u \in U(g)} \sum\limits_{e \in E_u}\wtl \xet}{|U(g)|}  \label{driver_obj_kad} \\
    & \textstyle\max \min\limits_{g \in \grps} \frac{\sum\limits_{t \in [T]}\sum\limits_{v \in V(g)} \sum\limits_{e \in E_v} \wtr \xet}{\sum\limits_{v \in V(g)} n_v}  \label{rider_obj_kad}  
    \end{align}
    \vspace{-0.5cm}
    \begin{subequations}\label{constraintkad}
    \begin{align}
     &  \text{s.t} \quad  \forall e \in E, \forall t \in [T]:  0 \leq \xet \leq 1 \label{constraintkad : probkad} \\
    &  \textstyle\forall u \in U: \sum\limits_{t \in [T]} \sum\limits_{e \in E_u} \xet \leq 1  \label{constraintkad: lmatchkad} \\
    &  \textstyle\forall v \in V, \forall t \in [T]: \sum_{e \in E_v} \xet \leq \pvt  \label{constraintkad : rmatchkad} 
    \end{align}
   \end{subequations}

\begin{restatable}{lemma}{validkad}\label{valid_kad}
For the \textbf{KAD} setting, the optimal solutions of LP (\ref{sys_obj_kad}), LP (\ref{driver_obj_kad}) and LP (\ref{rider_obj_kad}) are upper bounds on the expected optimal that can be achieved by any algorithm for the operator's profit, the offline side group fairness objective, and the online side group fairness objective, respectively. 
\end{restatable}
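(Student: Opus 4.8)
The plan is to run the standard LP-benchmark argument: take any algorithm (in particular an optimal one for a given objective) and show that its execution induces a feasible point of the shared constraint set (\ref{constraintkad}) whose value under the relevant LP objective is exactly the algorithm's expected value for that objective. Since each LP is a maximization over the same feasible region, its optimum dominates this value, which yields the claimed upper bound. Feasibility need only be checked once, and the three objective identities are then verified separately.

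Concretely, fix an arbitrary algorithm $\ALG$ and, for every edge $e=(u,v)\in E$ and round $t\in[T]$, define $\xet := \Pr[e \text{ is matched in round } t]$, the probability being over the random arrival sequence and the internal randomness of $\ALG$. First I would argue feasibility. Constraint (\ref{constraintkad : probkad}) is immediate since each $\xet$ is a probability. For (\ref{constraintkad: lmatchkad}) I use that in the \textbf{KAD} model ($p_e=1$ and unit patience $\pl=1$) an offline vertex $u$ is successfully matched at most once over the whole horizon: once matched it leaves the system, so the events ``$u$ matched through $e$ at round $t$'', ranging over $e\in E_u$ and $t\in[T]$, are mutually exclusive. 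Summing their probabilities therefore gives $\Pr[u \text{ is ever matched}]\le 1$. For (\ref{constraintkad : rmatchkad}), a type-$v$ vertex can be matched in round $t$ only if a vertex of type $v$ actually arrives in round $t$, and since $\pr=1$ it is matched along at most one incident edge; hence $\sum_{e\in E_v}\xet = \Pr[\text{a type-}v \text{ vertex arrives at } t \text{ and is matched}] \le \Pr[\text{type-}v \text{ arrives at } t] = \pvt$.

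Next I would match the objectives by linearity of expectation. The operator's expected profit is $\E[\sum_{e\in\mathcal{M}}\wto] = \sum_{t\in[T]}\sum_{e\in E}\wto\,\xet$, which is exactly objective (\ref{sys_obj_kad}) evaluated at $\{\xet\}$. For the offline side, for every group $g$ we have $\E[\sum_{u\in U(g)}\sum_{e\in\mathcal{M}_{\lv}}\wtl] = \sum_{t\in[T]}\sum_{u\in U(g)}\sum_{e\in E_u}\wtl\,\xet$, so dividing by $|U(g)|$ and taking the minimum over $g$ reproduces objective (\ref{driver_obj_kad}). The online case is identical with $\wtr$, $V(g)$, and denominator $\sum_{v\in V(g)}\nv = \sum_{v\in V(g)}\sum_{t\in[T]}\pvt$, giving objective (\ref{rider_obj_kad}). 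Because $\{\xet\}$ is feasible, the optimum of each maximization LP is at least the value it attains at $\{\xet\}$, i.e. at least the corresponding expected objective of $\ALG$; applying this to the algorithm that is optimal for each respective objective completes the proof.

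The argument involves essentially no computation; the only care is in the feasibility step, specifically in justifying the ``matched at most once'' facts that give constraints (\ref{constraintkad: lmatchkad}) and (\ref{constraintkad : rmatchkad}). This is precisely where the simplifying \textbf{KAD} assumptions ($p_e=1$ and unit patience) are used, so I expect the main (and only minor) obstacle to be stating these exclusivity/conditioning observations cleanly rather than any analytical difficulty.
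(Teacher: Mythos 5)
Your proof is correct, and it takes a genuinely more direct route than the paper's. The paper argues per realization: for each realized arrival sequence $G$ it writes an auxiliary LP in which the arrival indicators $1_{v,t}$ replace $\pvt$ on the right-hand side of constraint (\ref{constraintkad : rmatchkad}), observes that this per-realization LP upper bounds the clairvoyant optimum $\OPT(G)$, and then proves $LP(\EI[G]) \ge \EI[LP(G)]$ by averaging the per-realization optimal LP solutions $x^{*,G}_{e,t}$ --- feasibility of the averaged point follows from linearity of the constraints and $\EI[1_{v,t}]=\pvt$, which is exactly parallel to your feasibility check for (\ref{constraintkad : rmatchkad}). You compress these two steps into one by plugging the match probabilities $\xet$ of an arbitrary policy directly into the expected-instance LP, bypassing the intermediate family of LPs; the structural facts ``matched at most once'' and ``matched only upon arrival'' (where $p_e=1$ and unit patience enter) then do all the work, which is cleaner and makes explicit where the \textbf{KAD} simplifications are used. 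Two small additions would align your argument with the paper's formal benchmark $\OPT(\mathcal{I})=\E_{I\sim\mathcal{I}}[\OPT(I)]$ from the Performance Criterion paragraph, which is stronger than ``best expected value over online algorithms'': (i) your feasibility verification nowhere uses that $\ALG$ is online, so you should state that it applies equally to prescient policies that see the whole arrival sequence, since those are what realize $\E_{I}[\OPT(I)]$; and (ii) for the two fairness objectives the per-realization optimum has the minimum over groups \emph{inside} the expectation, so you need the one-line concavity step $\min_{g}\E[\cdot] \ge \E[\min_{g}\cdot]$ to conclude that the LP value, which at your feasible point equals a minimum of expectations, dominates $\E_{I}[\OPT(I)]$. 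The paper glosses over this same point (``the other objectives follow through an identical argument''), so this is a refinement rather than a repair.
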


Note that in the above LP we have $\xet$ as the probability for successfully assigning an edge in round $t$ (with an explicit dependence on $t$), unlike in the \textbf{KIID} setting where we had $x_e$ instead to denote the expected number of times edge $e$ is probed through all rounds. Similar to our solution for the \textbf{KIID} setting, we denote by $\xets$, $\yets$, and $\zets$ the optimal solutions of the LP benchmarks for the operator's profit, offline side group fairness, and online side group fairness, respectively. 

Having the optimal solutions to the LPs, we use algorithm $\tsfkad$ shown in Algorithm \ref{alg:alg_kad_group}. In $\tsfkad$ new parameters are introduced, specifically $\lambda$ and $\ret$ where $\ret$ is the probability that edge $e=(u,v)$ is safe to add in round $t$, i.e. the probability that $u$ is unmatched at the beginning of round $t$. For now we assume that we have the precise values of $\ret$ for all rounds and discuss how to obtain these values at the end of this subsection. Now conditioned on $v$ arriving at round $t$ and $e=(u,v)$ being safe to add, it follows that $e$ is sampled with probability $\alpha \frac{\xets}{\pvt}\frac{\lambda}{\ret} + \beta \frac{\yets}{\pvt}\frac{\lambda}{\ret}+ \gamma \frac{\zets}{\pvt}\frac{\lambda}{\ret}$ which would be a valid probability (positive and not exceeding 1) if $\ret \ge \lambda$. This follows from the fact that $\alpha,\beta,\gamma \in [0,1]$ and $\alpha+\beta+\gamma \leq 1$ and also by constraint (\ref{constraintkad : rmatchkad}) which leads to $\frac{\sum_{e \in E_v}\xet}{\pvt}\leq 1$. Further, if $\ret \ge \lambda$ then by constraint (\ref{constraintkad : rmatchkad}) we have $\sum_{e\in E_v} \Big( \alpha \frac{\xets}{\pvt}\frac{\lambda}{\ret} + \beta \frac{\yets}{\pvt}\frac{\lambda}{\ret}+ \gamma \frac{\zets}{\pvt}\frac{\lambda}{\ret} \Big) \leq 1$ and therefore the distribution is valid. Clearly, the value of $\lambda$ is important for the validity of the algorithm, the following lemma shows that $\lambda=\frac{1}{2}$ leads to a valid algorithm. 
\begin{restatable}{lemma}{kadalgvalid}\label{kad_alg_valid}
Algorithm $\tsfkad$ is valid for $\lambda=\frac{1}{2}$. 
\end{restatable}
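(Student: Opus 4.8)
The plan is to reduce the notion of validity to the single inequality $\ret \ge \lambda$ holding for every edge $e=(u,v)$ and every round $t$, and then to establish this inequality by induction on $t$. The discussion preceding the lemma already verifies that, whenever $\ret \ge \lambda$, the conditional sampling rate $\frac{\lambda}{\ret}\big(\alpha \frac{\xets}{\pvt} + \beta \frac{\yets}{\pvt} + \gamma \frac{\zets}{\pvt}\big)$ is a legitimate probability and that, by constraint (\ref{constraintkad : rmatchkad}), the sum of these rates over $e \in E_v$ does not exceed $1$. So it suffices to prove $\ret \ge \tfrac12$ when $\lambda=\tfrac12$. Since $\ret$ is exactly the probability that the offline vertex $u$ is still unmatched at the start of round $t$, it depends only on $u$ and $t$ (it is identical for all edges incident on $u$), and the base case $t=1$ is immediate: every $u$ is unmatched before any arrivals, so $\rho_{e,1}=1 \ge \lambda$.

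For the inductive step I would assume $\rho_{e,s} \ge \lambda$ for all rounds $s \le t-1$, which guarantees that in each such round the algorithm was valid and used the sampling rates above. The key computation is the probability that $u$ becomes matched during a fixed round $s$. Conditioning on the arriving type $v$ (probability $\pvt$, these events being mutually exclusive across types and independent of the matching history) and on $u$ being safe at the start of round $s$ (probability $\rho_{e,s}$, independent of the round-$s$ arrival), the edge $e=(u,v)$ is probed and, since $p_e=1$ in the \textbf{KAD} setting, successfully matched with the conditional rate above. Multiplying and summing over the neighbours $v$ of $u$, both $\pvt$ and $\ret$ cancel, leaving
\begin{equation*}
\Pr[u \text{ matched in round } s] = \lambda\Big(\alpha \textstyle\sum_{e \in E_u} x^*_{e,s} + \beta \sum_{e \in E_u} y^*_{e,s} + \gamma \sum_{e \in E_u} z^*_{e,s}\Big).
\end{equation*}
This cancellation of $\ret$ is the crucial structural feature of the algorithm: dividing the sampling rate by $\ret$ makes the per-round matching probability independent of how safe $u$ happened to be, which is exactly what lets the induction close in a clean, decoupled way.

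Finally, since $u$ can be matched at most once, $\ret = 1 - \sum_{s=1}^{t-1}\Pr[u \text{ matched in round } s]$. Substituting the displayed expression and factoring out $\lambda$, the inner sums satisfy $\sum_{s=1}^{t-1}\sum_{e\in E_u} x^*_{e,s} \le \sum_{t\in[T]}\sum_{e\in E_u} x^*_{e,t} \le 1$ by the offline capacity constraint (\ref{constraintkad: lmatchkad}), and likewise for $y^*$ and $z^*$. Combined with $\alpha+\beta+\gamma \le 1$ this yields $\ret \ge 1 - \lambda(\alpha+\beta+\gamma) \ge 1-\lambda$. The requirement $1-\lambda \ge \lambda$ forces $\lambda \le \tfrac12$, and setting $\lambda=\tfrac12$ gives $\ret \ge \tfrac12 = \lambda$, which completes the induction and the proof.

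The main obstacle is the self-referential definition of $\ret$: it is determined by the algorithm's behaviour in earlier rounds, which itself depends on the earlier $\rho$ values through the $\frac{\lambda}{\rho}$ factors. The induction is what disentangles this dependence, and the cancellation identity above is what makes each inductive step reduce to a bound that uses only the LP constraints and $\alpha+\beta+\gamma \le 1$ rather than any further information about the $\rho$ values themselves. A secondary point requiring care is computing $\Pr[u \text{ matched in round } s]$ correctly, using both the mutual exclusivity of distinct type arrivals and the independence of the round-$s$ arrival from the history that determines $\rho_{e,s}$.
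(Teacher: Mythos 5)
Your proof is correct and follows essentially the same route as the paper's: induction on $t$, with the $\frac{\lambda}{\rho_{e,t'}}$ factor cancelling $\rho_{e,t'}$ (and $p_{v,t'}$) so that the per-round matching probability collapses to $\lambda(\alpha x^*_{e,t'}+\beta y^*_{e,t'}+\gamma z^*_{e,t'})$, which is then summed and bounded using the LP constraints together with $\alpha+\beta+\gamma\leq 1$. If anything, your accounting is slightly more careful than the paper's: you sum the matching probability over \emph{all} edges incident to $u$ and invoke the offline capacity constraint (\ref{constraintkad: lmatchkad}), whereas the paper's displayed computation tracks only the single edge $e$ and attributes the final bound to constraint (\ref{constraintkad : probkad}), which on its own would not justify $\sum_{t'<t}x^*_{e,t'}\leq 1$.
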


\begin{algorithm}[h!]
   \caption{$\tsfkad(\alpha,\beta,\gamma$) }
   \label{alg:alg_kad_group}
\begin{algorithmic}[1]
   \STATE Let $v$ be the vertex type arriving at time $t$.
   \IF{ $E_{v,t}=\phi$} 
        \STATE Reject $v$ 
    \ELSE
       \STATE With probability $\alpha$ probe $e$ with probability $\frac{\xets}{\pvt}\frac{\lambda}{\ret}$. 
       \STATE With probability $\beta$ probe $e$ with probability $\frac{\yets}{\pvt}\frac{\lambda}{\ret}$. 
       \STATE With probability $\gamma$ probe $e$ with probability $\frac{\zets}{\pvt}\frac{\lambda}{\ret}$. 
       \STATE With probability $[1- (\alpha+\beta+\gamma)]$ reject $v$  .
    \ENDIF
\end{algorithmic}
\end{algorithm}


We now return to the issue of how to obtain the values of $\ret$ for all rounds. This can be done by using the simulation technique as done in \cite{dickerson2021allocation,adamczyk2015improved}. To elaborate, we note that we first solve the LPs (\ref{sys_obj_kad},\ref{driver_obj_kad},\ref{rider_obj_kad}) and hence have the values of $\xets$, $\yets$, and $\zets$. Now, for the first round $t=1$, clearly $\ret=1, \forall e\in E$. To obtain $\ret$ for $t=2$, we simulate the arrivals and algorithm a collection of times, and use the empirically estimated probability. More precisely, for $t=1$ we sample the arrival of vertex $v$ from $\pvt$ with $t=1$ ($\pvt$ values are given as part of the model), then we run our algorithm for the values of $\alpha,\beta,\gamma$ that we have chosen. Accordingly, at $t=2$ some vertex in $U$ might be matched. We do this simulation a number of times and then we take $\ret$ for $t=2$ to be the average of all runs. Now having the values of $\ret$ for $t=1$ and $t=2$, we further simulate the arrivals and the algorithm to obtain $\ret$ for $t=3$ and so on until we get $\ret$ for the last round $T$. We note that using the Chernoff bound \cite{mitzenmacher2017probability} we can rigorously characterize the error in this estimation, however by doing this simulation a number of times that is polynomial in the problem size, the error in the estimation would only affect the lower order terms in the competitive ration analysis \cite{dickerson2021allocation} and hence for simplicity it is ignored. Now, with Lemma \ref{kad_alg_valid} Theorem \ref{main_th_kad} can be proved (see Appendix (\ref{app:proofs})).

\subsection{Individual Fairness \textbf{KIID} and \textbf{KAD} Settings:}\label{sec:indiv}
For the case of Rawlsian (max-min) individual fairness, we consider each vertex of the offline side to belong to its own distinct group and the definition of group max-min fairness would lead to individual max-min fairness. On the other hand, for the online side a similar trick would not yield a meaningful criterion, we instead define the individual max-min fairness for the online side to equal $\min\limits_{t \in [T]}\E[\text{util}(v_t)] = \min\limits_{t \in [T]} \E[\sum_{e \in \mathcal{M}_{\rv_t}} \wtr)]$ where $\text{util}(v_t)$ is the utility received by the vertex arriving in round $t$. If we were to denote by $\xet$ the probability that the algorithm would successfully match $e$ in round $t$, then it follows straightforwardly that $\E[\text{util}(v_t)] = \sum_{e \in E_{v_t}} \wtr \xet$. We consider this definition to be the valid extension of max-min fairness for the online side as we are now concerned with the minimum utility across the online individuals (arriving vertices) which are $T$ many. The following lemma shows that we can solve two-sided individual max-min fairness by a reduction to two-sided group max-min fairness in the \textbf{KAD} arrival setting: 
\begin{restatable}{lemma}{reducgroup}\label{reduc_indiv_group}
Whether in the \textbf{KIID} or \textbf{KAD} setting, a given instance of two-sided individual max-min fairness can be converted to an instance of two-sided group max-min fairness in the \textbf{KAD} setting. 
\end{restatable}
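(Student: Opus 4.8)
The plan is to handle the two sides separately and to observe that the \textbf{KIID} setting is a special case of \textbf{KAD} (take $\pvt=\pv$), so it suffices to produce a single \textbf{KAD} group-fairness instance whose three objectives coincide with the three individual-fairness objectives of the input. The offline side is immediate: assigning each $\lv\in\ls$ to its own singleton group $\{\lv\}$ gives $|U(g)|=1$ for every such group, so the offline group objective $\min_{g}\frac{\E[\sum_{u\in U(g)}\sum_{e\in\mathcal{M}_u}\wtl]}{|U(g)|}$ collapses to $\min_{\lv\in\ls}\E[\sum_{e\in\mathcal{M}_{\lv}}\wtl]$, which is exactly the offline individual objective. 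The real work is the online side, where the \emph{per-round} individual criterion must be re-expressed as a \emph{per-group} criterion.

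For the online construction I would split each type and index it by round. Concretely, build a new type set $\VH=\{v^{(t)}:\rv\in\rs,\ t\in[T]\}$, where $v^{(t)}$ inherits the neighborhood and the weights $\wtr$ of $\rv$, and set its arrival probability to $p_{v^{(t)},s}=\pvt$ if $s=t$ and $0$ otherwise. This is a valid \textbf{KAD} distribution since $\sum_{v^{(t)}}p_{v^{(t)},s}=\sum_{\rv}p_{v,s}=1$ for every round $s$, and it gives $n_{v^{(t)}}=\sum_{s}p_{v^{(t)},s}=\pvt$. Now define one online group per round, $g_t=\{v^{(t)}:\rv\in\rs\}$. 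The key computation is the denominator of the group ratio: $\sum_{v'\in V(g_t)}n_{v'}=\sum_{\rv}\pvt=1$.

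With these groups the online group objective becomes exactly the online individual objective. By construction only copies in $g_t$ can arrive in round $t$, and no copy in $g_t$ can arrive in any other round; since exactly one type arrives per round, the total utility collected by $g_t$ over all $T$ rounds equals the utility of the single vertex that arrives in round $t$. Hence $\E[\sum_{v'\in V(g_t)}\sum_{e\in\mathcal{M}_{v'}}\wtr]=\E[\text{util}(v_t)]$, and dividing by the denominator $1$ leaves this unchanged. Taking the minimum over the groups $\{g_t\}_{t\in[T]}$ therefore reproduces $\min_{t\in[T]}\E[\text{util}(v_t)]$, the online individual criterion. The operator's profit is untouched, because the edge set and the weights $\wto$ are only relabeled by the split, so all three objectives of the new group instance coincide with the three objectives of the original individual instance; consequently any algorithm for the former is an algorithm for the latter with identical competitive guarantees.

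The step I expect to require the most care is verifying the normalization $\sum_{v'\in V(g_t)}n_{v'}=1$ together with the one-arrival-per-round fact, since it is precisely this that lets the normalized group ratio equal the unnormalized per-round expectation and makes the round-to-group correspondence exact. A secondary point worth stating explicitly is that the construction is inherently \textbf{KAD} even when the input is \textbf{KIID}: restricting each copy $v^{(t)}$ to a single round is a non-stationary arrival pattern, which is why the reduction must target \textbf{KAD} rather than \textbf{KIID}.
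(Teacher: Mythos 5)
Your proposal is correct and follows essentially the same construction as the paper: singleton groups on the offline side, and on the online side a split of each type into $T$ round-indexed copies $v^{(t)}$ with $p_{v^{(t)},s}=\pvt$ for $s=t$ and $0$ otherwise, grouped by round into $g_t$. In fact you spell out the details the paper leaves implicit (the normalization $\sum_{v'\in V(g_t)} n_{v'}=1$, the one-arrival-per-round correspondence, and why the target must be \textbf{KAD} even for \textbf{KIID} inputs), so no changes are needed.
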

The above Lemma with algorithm $\tsfkad$ can be used to prove Theorem \ref{main_th_indiv} as shown in Appendix (\ref{app:proofs}).

\section{Experiments}\label{sec:experiments}

\begin{figure*}[t!]
  \centering
  \includegraphics[width=0.8\textwidth]{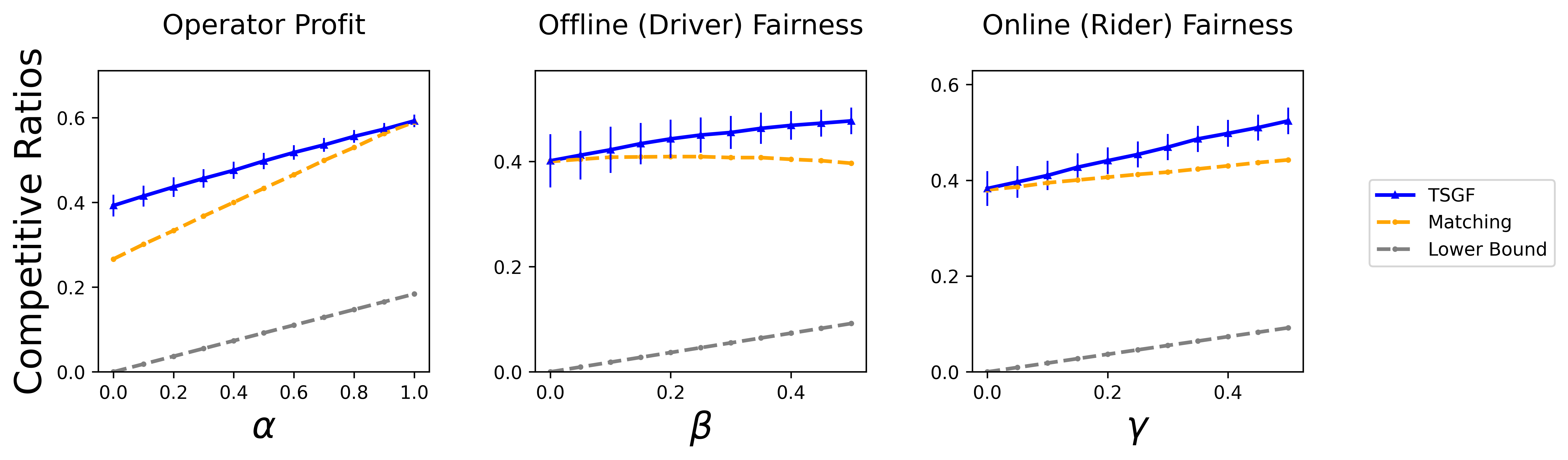}
  \caption{Competitive ratios for $\tsf$ over the operator's profit, offline (driver) fairness objective, and online (rider) fairness objective with different values of $\alpha,\beta,\gamma$. Note that ``Matching'' refers to the case where driver and rider utilities are set to 1 across all edges. The experiment is run with $\alpha = \{0,0.1,0.2,...,1\}$, and $\beta = \gamma = \frac{1-\alpha}{2}$. Higher competitive ratio indicates better performance.
}
\vspace{-0.4cm}
  \label{fig:fm_exp_fig}
\end{figure*}

In this section, we verify the performance of our algorithm and our theoretical lower bounds for the \text{KIID} and group fairness setting using algorithm $\tsf$ (Section \ref{sec:kiid}). We note that none of the previous work consider our three-sided setting. We use rideshare as an application example of online bipartite matching~\cite[see also, e.g.,][]{dickerson2021allocation,nanda2019balancing,xutrade, barann2017open}. We expect similar results and performance to hold in other matching applications such as crowdsourcing. 

\paragraph{\textbf{Experimental Setup}:} As done in previous work, the drivers' side is the offline (static) side whereas the riders' side is the online side. We run our experiments over the widely used New York City (NYC)\ yellow cabs dataset~\cite{sekulic2021spatially, nanda2019balancing,xutrade, AlonsoPredictive2017} which contains records of taxi trips in the NYC area from 2013. Each record contains a unique (anonymized) ID of the driver, the coordinates of start and end locations of the trip, distance of the trip, and  additional metadata.

Similar to \cite{dickerson2021allocation,nanda2019balancing}, we bin the starting and ending latitudes and longitudes by dividing the latitudes from $40.4^{\circ}$ to $40.95^{\circ}$ and longitudes from $-73^{\circ}$ to $-75^{\circ}$ into equally spaced grids of step size $0.005$. This enables us to define each driver and request type based on its starting and ending bins. We pick out the trips between 7pm and 8pm on January 31, 2013, which is a rush hour with 10,814 drivers and 35,109 trips. We set driver patience $\Delta_u$ to 3. Following \cite{xutrade}, we uniformly sample rider patience $\Delta_v$ from $\{1, 2\}$.




Since the dataset does not include demographic information, for each vertex we randomly sample the group membership  \cite{nanda2019balancing}. Specifically, we randomly assign $70\%$ of the riders and drivers to be advantaged and the rest to be disadvantaged. The value of $p_e$ for $e=(u,v)$ depends on whether the vertices belong to the advantaged or disadvantaged group. Specifically, $p_e=0.6$ if both vertices are advantaged, $p_e=0.3$ if both are disadvantaged, and $p_e=0.1$ for other cases.

In addition to this, a key component of our work is the use of driver and rider specific utilities. We follow the work of~\cite{suhr2019two} to set the utilities. We adopt the Manhattan distance metric rather than the Euclidean distance metric since the former is a better proxy for length of taxi trips in New York City. We set the operator's utility to the rider's trip length $w^O_e=\text{tripLength$(v)$}$---a rough proxy for profit. In addition, the rider's utility over an edge $e=(u,v)$ is set to $w^V_e=-\text{dist$(u,v)$}$ where $\text{dist$(u,v)$}$ is the distance between the rider and the driver. The driver's utility is set to $w^U_e=\text{tripLength$(v)$}-\text{dist$(u,v)$}$. Whereas the trip length $\text{tripLength$(v)$}$ is available in the dataset, the distance between the rider and the driver $\text{dist$(u,v)$}$ is not. We therefore simulate the distance, by creating an equally spaced grid with step size $0.005$ around the starting coordinates of the trip. This results in 81 possible coordinates in the vicinity of the starting coordinates of the trip. We then randomly choose one of these 81 coordinates to be the location of the driver when the trip was requested. Then $dist(u, v)$ is the distance between this coordinate to the start coordinate of the trip. This is a valid approximation since the platform would not assign drivers unreasonably far away to pickup a rider. Lastly, we scale the utilities by a constant to prevent them from being negative. 



We run $\tsf$ at the scale of $|U|=49$, $|V|=172$ for 100 trials. During each trial, we randomly sample 49 drivers and 172 requests between 7 and 8pm, and run $\tsf$ 100 times to measure the expected competitive ratios of this trial. We then averaged the competitive ratios over all trials, and the results are reported in figure \ref{fig:fm_exp_fig}. Code to reproduce our experiments is available in the blinded format\footnote{https://github.com/anonymousUser634534/TSGF}; we will release that code in deblinded form upon acceptance.

\vspace{-0.cm}
\paragraph{\textbf{Performance of $\tsf$ with Varied Parameters}:} Figure~\ref{fig:fm_exp_fig} shows the performance of our algorithm over the three objectives: operator's profit, offline (driver) group fairness, and online (rider) group fairness. It is clear that the algorithm behaves as expected with all objectives being steadily above their theoretical lower bound. More importantly, we see that increasing the weight for an objective leads to better performance for that objective. I.e., a higher weight for $\beta$ leads to better performance for the offline side fairness and similar observations follow in the case of $\alpha$ for the operator's objective and in the case of $\gamma$ for the online-fairness. This also indicates the limitation in previous work which only considered fairness for one-side since their algorithms would not be able to improve the fairness for the other ignored side. 

Furthermore, previous work \citep[e.g.,][]{nanda2019balancing,xutrade,ma2022group} only considered the matching size when optimizing the fairness objective for the offline (drivers) or online (riders) side. This is in contrast to our setting where we consider the matching quality. To see the effect of ignoring the matching quality and only considering the size, we run the same experiments with $w^U_e=w^V_e=1, \forall e \in E$, i.e. the quality is ignored. The results are shown shown in the graph labelled ``Matching'' in figure~\ref{fig:fm_exp_fig}, it is clear that ignoring the match quality leads to noticeably worse results. 

\paragraph{\textbf{Comparison to Heuristics}:} We also compare the performance of $\tsf$ against three other heuristics. In particular, we consider Greedy-O which is a greedy algorithm that upon the arrival of an online vertex (rider) $v$ picks the edge $e \in E_v$ with maximum value of $p_e w^O_e$ until it either results in a match or the patience quota is reached. We also consider Greedy-R which is identical to Greedy-O except that it greedily picks the edge with maximum value of $p_e w^V_e$ instead, therefore maximizing the rider's utility in a greedy fashion. Moreover, we consider Greedy-D which is a greedy algorithm that upon the arrival of an online vertex $v$, first finds the group on the offline side with the lowest average utility so far, then it greedily picks an offline vertex (driver) $u \in E_v$ from this group (if possible) which has the maximum utility until it either results in a match or the patience limit is reached. We carried out 100 trials to compare the performance of $\tsf$ with the greedy algorithms, where each trial contains 49 randomly sampled drivers and 172 requests and is repeated 100 times. The aggregated results are displayed in table \ref{table:heuristic_comparison}. We see that $\tsf$ outperforms the heuristics with the exception of a small under-performance in comparison to Greedy-D. However, using Greedy-D we cannot tune the weights ($\alpha$, $\beta$, and $\gamma$) to balance the objectives as we can in the case of $\tsf$.

\begin{table}
\centering
\begin{tabular}{@{}lccc@{}}
\toprule
                               & \textbf{Profit} & \textbf{\begin{tabular}[c]{@{}c@{}}Driver \\ Fairness\end{tabular}} & \textbf{\begin{tabular}[c]{@{}c@{}}Rider \\ Fairness\end{tabular}} \\ \midrule
Greedy-O              & 0.431           & 0.549                                                              & 0.503                                                               \\
\textbf{$\tsf$ ($\alpha = 1$)} & 0.595           & 0.398                                                              & 0.384                                                               \\ \midrule
Greedy-D              & 0.371           & 0.609                                                              & 0.563                                                               \\
\textbf{$\tsf$ ($\beta = 1$)}  & 0.517           & 0.571                                                              & 0.44                                                                \\ \midrule
Greedy-R              & 0.316           & 0.504                                                              & 0.513                                                               \\
\textbf{$\tsf$ ($\gamma = 1$)} & 0.252           & 0.353                                                              & 0.574                                                               \\ \bottomrule

\end{tabular}
\caption{Competitive ratios of $\tsf$ with Greedy heuristics on the NYC dataset at $|U| = 49$, $|V| = 172$. Higher competitive ratio indicates better performance. }
\label{table:heuristic_comparison}
\end{table}
\vspace{-0.1cm}


\section*{Acknowledgments}
This research was supported in part by 
NSF CAREER Award IIS-1846237,  
NSF Award CCF-1918749,  
NSF Award CCF-1852352,  
NSF Award SMA-2039862,  
NIST MSE Award \#20126334,  
DARPA GARD \#HR00112020007,  
DARPA SI3-CMD \#S4761,  
DoD WHS Award \#HQ003420F0035,  
ARPA-E DIFFERENTIATE Award \#1257037,  
ARL Award W911NF2120076, 
and gifts by research awards from
Amazon, 
and Google.  
We are grateful to Pan Xu for advice and comments on earlier versions of this work.



\bibliography{refs}
\clearpage
\newpage 
\appendix
\section{Online Matching Model Details}\label{app:model_details}
\subsection{Arrival Setting (\textbf{KIID} and \textbf{KAD}):}
The modelling choices we have made follow standard settings in online matching \cite{mehta2013online,alaei2013online}. To elaborate further, the initial seminal paper on online matching \cite{karp1990optimal} does not assume any prior knowledge on the arrival of the online vertices of $V$ and follows adversarial analysis to establish theoretical guarantees on the competitive ratio. In addition to overly pessimistic theoretical results, the lack of prior knowledge is often an unrealistic assumption. Most decision makers in online matching settings are able to gain knowledge on the arrival rates of the online vertices and this knowledge can be used to build more realistic probabilistic knowledge of the arrival. 

Specifically, the Known Independent and Identically Distributed \textbf{KIID} model is an established model in online matching \cite{feldman2009online,mehta2013online,bahmani2010improved,manshadi2012online,dickerson2019balancing}. In this model, the collection of arriving vertices on the online side belong to a finite set of known types where the type of a vertex $v$ decides the edge connections $E_v$ it has to the vertices of $U$ along with the weights $w_{e}, \forall e\in E_v$ of those edges. Further, a given vertex of type $v$ arrives with the same probability $p_v$ in every round. These arrival probabilities can be estimated easily from historical data based on previous matchings. 

While the \textbf{KIID} model utilizes prior knowledge which is frequently available in practical applications, it is still restrictive since it assumes that the probabilities do not vary through time. The Known Adversarial Arrival \textbf{KAD} model (also known as prophet inequality matching) on the other hand, takes into account the dynamic variation in the probabilities. Therefore, the probability a vertex of type $v$ arrives in round $t$ is $p_{v,t}$ instead of being constant for every round $t$. This model is also well-established in the matching literature and has been used in a collection of papers such as \cite{alaei2012online,brubach2016new,dickerson2021allocation,dickerson2019online}. Despite the fact that the \textbf{KAD} model is well-motivated and richer than the \textbf{KIID} model it was not used in the one-sided online fair matching papers of \cite{nanda2019balancing,xutrade}.

\subsection{Patience:}
The patience parameter of a vertex $\Delta_u$ (or $\Delta_v$) for an offline vertex $u$ (or an online vertex $v$) models its tolerance for unsuccessful probes (match attempts) before leaving the system. We note that this is an important detail in the online matching model since it is frequently the case that the vertices in the online matching applications (such as advertising, crowdsourcing, and ridesharing) represent human participants who would only tolerate a fixed number of failed matching attempts before leaving the system. Like the \textbf{KIID} and \textbf{KAD} arrival models, the patience parameter is also well-established in online matching, see for example \cite{mehta2013online,bansal2010lp,adamczyk2015improved}. Despite the importance of this parameter, the previous work in fair online matching did not consider the patience issue for both sides simultaneously \cite{nanda2019balancing,xutrade}, handling both parameters at the same time is more challenging and leads to more tedious derivations. 

We further elaborate on the meaning of the patience for both the online and offline sides, we note again that this is following the research literature on online matching: 
\subsubsection{Offline Patience:}
Consider a vertex $u$ with patience $\Delta_u$, then vertex $u$ will remain on the offline side $U$ unless it is successfully matched or it receives $\Delta_u$ many failed matching attempts. As a concrete example, consider a vertex $u_1$ with patience $\Delta_{u_1}=2$. Clearly, in the first round ($t=1$) $u_1$ will be in the offline side $U$, suppose an unsuccessful matching attempt (unsuccessful probe) is made in this round, then in the next round $u_1$ will still be there. Suppose that the next round when $u_1$ is probed is in the fifth round ($t=5$), then if the probe is successful then $u_1$ is matched and will be removed from the offline side in the next rounds ($t>5$), but also if the match is unsuccessful then $u_1$ will not be matched but will still be removed for all of the next rounds ($t>5$) since it has a patience $\Delta_{u_1}=2$ and therefore can only take two failed matching attempts before leaving. 

\subsubsection{Online Patience:}
Unlike the offline side, an online vertex $v$ would arrive in a round $t$ and must be matched or rejected in that given round. While in a round $t$ we can at most match one online vertex (which is the arriving vertex $v$) to some offline vertex $u$, we can make multiple match attempts (probes) from $v$ to the vertices it is connected to in $U$ in that round $t$. The patience $\Delta_v$ of $v$ decides the upper limit on the number of failed attempts we can make in round $t$ before $v$ leaves the system and can no longer be matched even if a possible match was still not attempted. As a concrete example, suppose vertex of type $v_1$ with $\Delta_{v_1}=3$ arrives in round $t=7$ and that $v_1$ is connected to a total of four vertices $\{u_1,u_2,u_3,u_4\}$ in $U$ all of which are still available (i.e. unmatched and still have not ran out of patience), suppose we make match attempts (probes) to $u_1$ then $u_2$ then $u_3$, it follows since $\Delta_{v_1}=3$ that $v_1$ has left the system and we can no longer even attempt to match it to $u_4$ despite that fact that its available. Further, if at any probe attempt $v_1$ was matched then no further probe attempts are made to $v_1$, e.g. if the first probe ($v_1$ to $u_1$) in the above discussion was successful, then $v_1$ and $u_1$ are matched to each other and we cannot attempt to match $v_1$  to $u_2,u_3, \text{or }u_4$. 


\section{Proofs}\label{app:proofs}
Here we include the missing proofs. Each lemma/theorem is restated followed by its proof.
\subsection{Proofs for Section \ref{sec:kiid}}
\validkiid*
\begin{proof}
We follow a similar proof to that used in \cite{bansal2010lp}. We shall focus on the operator's profit objective since the other objectives follow by very similar arguments. First, we note that LP(\ref{sys_obj}) uses the expected values of the problem parameters, i.e. if we consider a specific graph realization $G$, then let $N^G_v$ be the number of arrival for vertex type $v$, then it follows that LP(\ref{sys_obj}) uses the expected values since $\E_{\mathcal{I}}[N^G_v]=1, \forall v \in V$ where $\E_{\mathcal{I}}[.]$ is an expectation over the randomness of the instance. We shall therefore refer to the value of LP(\ref{sys_obj}) as $LP(\EI[G])$.

To prove that $LP(\EI(G))$ is a valid upper bound it suffices to show that $LP(\EI[G[) \ge \EI[LP(G)]$ where $LP(G)$ is the optimal LP value of a realized instance $G$ and $\EI[LP(G)]$ is the expected value of that optimal LP value. Let us then consider a specific realization $G'$, its corresponding LP would be the following: 
{\small
\begin{align}
    & \textstyle\max \sum_{e' \in E'}{w^O_{e'} p_{e'} x_{e'}}  \label{sys_obj_real} 
    \end{align}
    \vspace{-0.5cm}
    \begin{subequations}\label{constraint_real}
    \begin{align}
    &  \text{s.t} \quad  \forall e' \in E':  0 \leq x_{e'} \leq 1 \label{constraint_real : assignment_real } \\
    &  \textstyle\forall u \in U: \sum_{e' \in E'_u} x_{e'} p_{e'} \leq 1  \label{constraint_real: capacity_real} \\
    &  \textstyle\forall u \in U: \sum_{e' \in E'_u} x_{e'} \leq \Delta_u  \label{constraint_real: cancellation_real} \\ 
    &  \textstyle\forall v' \in V': \sum_{e' \in E'_{v'}} x_{e'} p_{e'} \leq 1  \label{constraint_real : arrivalrate_real} \\
    &  \textstyle\forall v' \in V':  \sum_{e' \in E'_{v'}} x_{e'}  \leq \Delta_{v'} \label{constraint_real : patience_real} 
    \end{align}
   \end{subequations}
}
where $V'$ is the realization of the online side. It is clear that for a given realization $G'=(U,V',E')$ the above LP(\ref{sys_obj_real}) is an upper bound on the operator's objective value for that realization. 

Now we prove that $LP(\EI[G]) \ge \EI[LP(G)]$. The dual of the LP for the realization $G'$ is the following: 
{\small
\begin{align}
    & \textstyle\min \sum_{u \in U} (\alpha_u +\pl \beta_u) +  \sum_{v' \in V'} (\alpha_{v'} +\Delta_{v'} \beta_{v'}) + \sum_{(u,v')} \gamma_{u,v'} \label{dual_sys_obj_real}  
    \end{align}
    \vspace{-0.5cm}
    \begin{subequations}\label{dual_constraint_real}
    \begin{align}
        \notag & \text{s.t.} \quad  \forall u \in U, \forall v' \in V':  \\
      & \beta_u + \beta_{v'} + p_{(u,v')}(\alpha_u + \alpha_{v'}) + \gamma_{(u,v')}  \ge w^O_{(u,v')}  p_{(u,v')}  \label{dual_constraint_real : dual_real_1} \\
     & \alpha_u, \alpha_{v'}, \beta_u, \beta_{v'}, \gamma_{(u,v')} \ge 0  \label{dual_constraint_real: dual_real_2} 
    \end{align}
   \end{subequations}
}
Consider the graph with the expected number of arrival $\EI(G)$ it would have a dual of the above form, let $\vec{\alpha}^*,\vec{\beta}^*,\vec{\gamma}^*$ be the optimal solution of its corresponding dual. Then it follows by the strong duality of LPs that solution $\vec{\alpha}^*,\vec{\beta}^*,\vec{\gamma}^*$ would have a value of $LP(\EI[G])$. Now for the instance $G'$, we shall use the following dual solution $\vec{\ah},\vec{\bh},\vec{\gh}$ which is set as follows: 
\begin{itemize}
    \item $\forall u \in U: \ah_u= \alpha^*_u, \bh_u = \alpha^*_u$. 
    \item $\forall v' \in V'$ of type $v$: $\ah_{v'}= \alpha^*_v, \bh_{v'}= \beta^*_v$. 
    \item $\forall u \in U, \forall v' \in V'$ of type $v$: $\gh_{(u,v')}= \gamma^*_{(u,v)}$. 
\end{itemize}
Note that the new solution $\vec{\ah},\vec{\bh},\vec{\gh}$ is a feasible dual solution since it satisfies constraints \ref{dual_constraint_real : dual_real_1} and \ref{dual_constraint_real: dual_real_2}. By weak duality the value of the solution $\vec{\ah},\vec{\bh},\vec{\gh}$ upper bounds $LP(G')$. Now if we were to denote the number of vertices of type $v$ that arrived in instance $G'$ by $n^{G'}_v$, then the value of the solution $\vec{\ah},\vec{\bh},\vec{\gh}$ satisfies:
\begin{align*}
    & \sum_{u \in U} (\ah_u +\pl \bh_u) +  \sum_{v' \in V'} (\ah_{v'} +\Delta_{v'} \bh_{v'}) + \sum_{(u,v')} \gh_{u,v'} \\
    & =\sum_{u \in U} (\alpha^*_u +\pl \beta^*_u) +  \sum_{v \in V} n^{G'}_v (\alpha^*_{v} +\Delta_{v} \beta^*_{v}) + \sum_{(u,v)} n^{G'}_v \gamma^*_{u,v}\\
    & \ge LP(G')
\end{align*}
Now taking the expectation, we get: 
\begin{align*}
    & \EI[LP(G')]  \\
    & \leq \EI\Big[\sum_{u \in U} (\ah_u +\pl \bh_u) +  \sum_{v' \in V'} (\ah_{v'} +\Delta_{v'} \bh_{v'}) + \sum_{(u,v')} \gh_{u,v'} \Big] \\ 
    & = \EI\Big[\sum_{u \in U} (\alpha^*_u +\pl \beta^*_u) +  \sum_{v \in V} n^{G'}_v (\alpha^*_{v} +\Delta_{v} \beta^*_{v}) + \sum_{(u,v)} n^{G'}_v \gamma^*_{u,v} \Big] \\
    & = \sum_{u \in U} (\alpha^*_u +\pl \beta^*_u) +  \sum_{v \in V} \EI[n^{G'}_v] (\alpha^*_{v} +\Delta_{v} \beta^*_{v}) + \sum_{(u,v)} \EI[n^{G'}_v] \gamma^*_{u,v} \\ & = \sum_{u \in U} (\alpha^*_u +\pl \beta^*_u) +  \sum_{v \in V} (\alpha^*_{v} +\Delta_{v} \beta^*_{v}) + \sum_{(u,v)} \gamma^*_{u,v} \\
    & = LP(\EI[G])
\end{align*}
For the offline and online group fairness objectives, we use the same steps. The difference would be in the constraints of the dual program, however following a similar assignment as done from $\vec{\alpha}^*,\vec{\beta}^*,\vec{\gamma}^*$ to $\vec{\ah},\vec{\bh},\vec{\gh}$ is sufficient to prove the lemma for both fairness objectives. 
\end{proof}

Before we prove Lemma \ref{lemma:available-tolerance} for the lower bound on the probability of $SF_{u,t}$. We have to first introduce the following two lemmas. Specifically, let $A_{u,t}$ be the number of successful assignments that $u$ received and accepted before round $t$. Then the following lemma holds.  
\begin{restatable}{lemma}{safe} \label{lemma:safe}
For any given vertex $u$ at time $t \in [T]$ , $P[A_{u,t} =0] \geq  \Big( 1-  \frac{1}{T}  \Big)^{t-1}$. 
\end{restatable}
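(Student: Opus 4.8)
The plan is to bound the probability that $u$ receives a successful match within a single round and then chain these bounds across rounds $1,\dots,t-1$. First I would introduce, for each round $s$, the indicator $M_{u,s}$ of the event that $u$ receives a successful assignment in round $s$, so that $\{A_{u,t}=0\}=\bigcap_{s=1}^{t-1}\{M_{u,s}=0\}$, and write
\[
\Pr[A_{u,t}=0]=\prod_{s=1}^{t-1}\Pr\big[M_{u,s}=0 \mid M_{u,1}=0,\dots,M_{u,s-1}=0\big]
\]
by the chain rule. Thus it suffices to show that, conditioned on \emph{any} history of the first $s-1$ rounds, we have $\Pr[M_{u,s}=1\mid \text{history}]\le \tfrac{1}{T}$.

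To establish this central estimate, I would use that in the \textbf{KIID} setting each type $v$ arrives in round $s$ with probability $\tfrac{1}{T}$ independently of the past, and that the algorithm's internal coins (the $\alpha/\beta/\gamma$ branch choice, the dependent rounding, and the permutation inside \PPDR) are drawn fresh each round; hence the only way the history enters is through whether $u$ is still available. If $u$ is unavailable then $M_{u,s}=0$ deterministically, so it suffices to treat the case where $u$ is available. Conditioned on type $v$ arriving and on running $\PPDR(\vec{x}^*_v)$, the edge $e=(u,v)$ is probed only if its rounded coordinate $X_e=1$ (and additionally $u$ is available and $e$ survives the permutation), so by the \textbf{Marginal Distribution} property of dependent rounding the probability $e$ is probed is at most $\Pr[X_e=1]=x^*_e$; the analogous bound holds for $\vec{y}^*_v$ and $\vec{z}^*_v$. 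Multiplying by the success probability $p_e$ and summing over the single arriving type gives
\[
\Pr[M_{u,s}=1\mid u \text{ available}] \;\le\; \frac{1}{T}\sum_{e\in E_u}\big(\alpha x^*_e+\beta y^*_e+\gamma z^*_e\big)p_e \;\le\; \frac{\alpha+\beta+\gamma}{T} \;\le\; \frac{1}{T},
\]
where the middle inequality applies the capacity constraint~(\ref{constraint: capacity}), i.e. $\sum_{e\in E_u} s^*_e p_e\le 1$, separately to each of $\vec{x}^*,\vec{y}^*,\vec{z}^*$, and the last inequality uses $\alpha+\beta+\gamma\le 1$.

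Combining the two cases, $\Pr[M_{u,s}=0\mid\text{history}]\ge 1-\tfrac{1}{T}$ for every round $s$ and every history, and substituting this into the telescoped product yields $\Pr[A_{u,t}=0]\ge(1-\tfrac1T)^{t-1}$, as claimed. The main obstacle I anticipate is handling the conditioning cleanly: the event that $u$ is matched in round $s$ is \emph{not} independent of earlier rounds, since $u$'s availability is history-dependent. I would resolve this by emphasizing that the history-dependence enters only through availability, which can only \emph{decrease} the matching probability, so the uniform per-round bound $\tfrac1T$ holds conditionally on every history and the chain-rule product is therefore valid term by term.
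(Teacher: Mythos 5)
Your proposal is correct and takes essentially the same route as the paper: the paper likewise decomposes $\Pr[A_{u,t}=0]$ as a product over rounds $k<t$ and bounds the per-round success probability by $\frac{1}{T}\big(\alpha\sum_{e\in E_u}x_e^*p_e+\beta\sum_{e\in E_u}y_e^*p_e+\gamma\sum_{e\in E_u}z_e^*p_e\big)\le\frac{\alpha+\beta+\gamma}{T}\le\frac{1}{T}$, using the marginal-distribution property of dependent rounding together with constraint~(\ref{constraint: capacity}). Your explicit chain-rule handling of the history-dependence (availability can only decrease the per-round matching probability) is a slightly more careful justification of the product step that the paper writes directly as a product of unconditional per-round probabilities.
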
 
\begin{proof}
Let $X_{e,k}$ be the indicator random variable for $u$ receiving an arrival request of type $v$ where $e \in E_u$  and $k <t$. Let $Y_{e,k}$ be the indicator random variable that the edge $e$ gets sampled by the $\tsf(\alpha,\beta,\gamma)$ algorithm at time $k <t$. Let $Z_{e,k}$  be the indicator random variable that assignment $e=(u,v)$ is successful (a match) at time $k <t$. Then $ A_{u,t}=\sum_{k<t} \sum_{e \in E_u} X_{e,k}Y_{e,k}Z_{e,k}$. 
\vspace*{-2mm}
{\small
\begin{align*}
& Pr[A_{u,t} = 0] = \Pi_{k<t} Pr\Big[\sum_{e =(u,v) \in E_u} X_{e,k}Y_{e,k}Z_{e,k} = 0 \Big] \\
& = \Pi_{k<t}\Big(1- Pr\Big[\sum_{e \in E_u} X_{e,k}Y_{e,k}Z_{e,k} \geq 1 \Big]\Big) \\
& \ge \Pi_{k<t}\Big( 1- \sum_{e \in E_u} \frac{1}{T} \cdot  \big( \alpha x_{e}^* +  \beta \frac{y_{e}^*}{q_{v}} + \gamma \frac{z_{e}^*}{q_{v}} \big)\cdot p_{e}  \Big) \\
& =  \Pi_{k<t}\Big( 1-  \frac{1}{T} \cdot  \big( \alpha \sum_{e \in E_u} {x_{e}^*p_{e}  }+  \beta \sum_{e \in E_u} {y_{e}^*p_{e}  } + \gamma \sum_{e \in E_u} {z_{e}^*p_{e}  }\big) \Big) \\
& \geq  \Pi_{k<t}\Big( 1- \frac{1}{T} \cdot  \big( \alpha \cdot 1 +  \beta \cdot 1 + \gamma \cdot 1   \big) \Big) \\
& \geq  \Pi_{k<t}\Big( 1-  \frac{1}{T}  \Big) = \Big( 1-  \frac{1}{T}  \Big)^{t-1} 
\end{align*}
}
\vspace*{-2mm}
\end{proof} 

Now we lower bound the probability that $u$ was probed less than $\Delta_u$ times prior to $t$. Denote the number of probes received by $u$ before $t$ by $B_{u,t}$, then the following lemma holds: 
\begin{restatable}{lemma}{patiencevio} \label{lemma:patience-violation}
$Pr[B_{u,t} < \Delta_u]  \geq 1- \frac{t-1}{T} $. 
\end{restatable}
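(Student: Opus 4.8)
The plan is to bound the expectation $\E[B_{u,t}]$ and then invoke Markov's inequality, using the fact that $\Delta_u$ is a positive integer so that the event $\{B_{u,t} < \Delta_u\}$ is exactly the complement of $\{B_{u,t} \ge \Delta_u\}$. First I would write $B_{u,t} = \sum_{k<t}\sum_{e \in E_u} \mathbf{1}[e \text{ is probed in round } k]$. Since exactly one type arrives per round and each neighbor $v$ is joined to $u$ by a single edge, $u$ receives at most one probe per round, so this sum genuinely counts the probes to $u$ over rounds $1,\dots,t-1$. By linearity of expectation it then suffices to bound the per-round, per-edge probe probability.

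For an edge $e=(u,v)$, a probe in round $k$ requires that a vertex of type $v$ arrives (probability $\frac{1}{T}$, since $n_v=1$ in the \textbf{KIID} setting), that the algorithm selects one of the three \PPDR\ subroutines, and that $e$ is retained by dependent rounding and reached in the random permutation while $u$ is still available. Dropping the availability check and the early-stopping of the permutation only increases the probability, so by the \textbf{Marginal Distribution} property of dependent rounding the probe probability is at most $\frac{1}{T}\big(\alpha x_e^* + \beta y_e^* + \gamma z_e^*\big)$. Summing over $e \in E_u$ and $k<t$ gives
\[
\E[B_{u,t}] \le \frac{t-1}{T}\Big(\alpha \sum_{e \in E_u} x_e^* + \beta \sum_{e \in E_u} y_e^* + \gamma \sum_{e \in E_u} z_e^*\Big).
\]
Each inner sum is at most $\Delta_u$ by the patience constraint (\ref{constraint: cancellation}) applied to $\vec{x}^*$, $\vec{y}^*$, and $\vec{z}^*$ respectively, and since $\alpha+\beta+\gamma \le 1$ this yields $\E[B_{u,t}] \le \frac{t-1}{T}\Delta_u$.

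Finally, Markov's inequality gives $Pr[B_{u,t} \ge \Delta_u] \le \E[B_{u,t}]/\Delta_u \le \frac{t-1}{T}$, and hence $Pr[B_{u,t} < \Delta_u] \ge 1 - \frac{t-1}{T}$, as claimed. The only delicate point is the bound on the per-round probe probability: one must argue that ignoring both the availability condition inside \PPDR\ and the event that the random permutation halts after a successful match can only overcount probes, so that the dependent-rounding marginal $x_e^*$ (resp. $y_e^*$, $z_e^*$) is a legitimate upper bound on the conditional probe probability. Once that is granted, the remainder is just linearity of expectation, the LP patience constraint, and a one-line Markov bound.
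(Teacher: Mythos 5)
Your proposal is correct and follows essentially the same route as the paper's proof: bound $\E[B_{u,t}]$ edge-by-edge and round-by-round via the dependent-rounding marginals (the paper writes this as $\E[X_{e,k}Y_{e,k}] \leq \frac{1}{T}(\alpha x_e^* + \beta y_e^* + \gamma z_e^*)$), sum using the patience constraint $\sum_{e\in E_u} x_e^* \leq \Delta_u$ and $\alpha+\beta+\gamma \leq 1$ to get $\E[B_{u,t}] \leq \frac{(t-1)\Delta_u}{T}$, then apply Markov's inequality. If anything, you are slightly more careful than the paper in spelling out why dropping the availability check and the permutation's early stopping only overcounts probes, which is implicit in the paper's definition of $Y_{e,k}$.
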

\begin{proof}
First it is clear that $B_{u,t}=\sum_{k<t}\sum_{e \in E_u}X_{e,k}Y_{e,k}$. 

{
\begin{align*}
    & \mathbb{E}[B_{u,t}] = \sum_{k<t}\sum_{e \in E_u} \mathbb{E}[X_{e,k}Y_{e,k} ] \\ 
    & \leq \sum_{k<t} \sum_{e \in E_u} \frac{1}{T}  \Big( \alpha x_{e}^* + \beta y_{e}^* + \gamma z_{e}^* \Big) \\ 
    & \leq \sum_{k<t} \frac{1}{T} \Big( \alpha \sum_{e \in E_d} {x_{e}^*  }+  \beta \sum_{e \in E_u} {y_{e}^*} +\gamma \sum_{e \in E_u} z_{e}^* \Big) \\ 
    & \leq  \sum_{k<t} \frac{\Delta_u}{T} (\alpha +  \beta+\gamma ) \leq \frac{(t-1)\Delta_u}{T}
\end{align*}
}
The inequality before the last follows from $ (\alpha + \beta +\gamma) \leq 1$. Now using Markov's inequality: $Pr[B_{u,t} < \Delta_u]  \geq  1- \frac{\mathbb{E}[B_{u,t}] }{\Delta_u}$, we get $\implies Pr[B_{u,t} < \Delta_u]  \geq  1- \frac{t-1}{T}$. 
\end{proof}
Now we restate Lemma \ref{lemma:available-tolerance} and prove it.  
\lemmatolerance*
\begin{proof}
Consider a given edge $e \in E_u$ where $k < t$
\begin{align*}
    & \mathbb{E}[X_{e,k}Y_{e,k} \mid A_{u,t} =0 ] =  \mathbb{E}[X_{e,k}Y_{e,k} \mid A_{u,k} =0] \\ 
    & = \frac{Pr[X_{e,k}=1 , Y_{e,k} =1 ,  Z_{e,k} =0 ]}{Pr[A_{u,k}=0]} \\
    & \leq  \frac{\frac{1}{T} \cdot  \big( \alpha x_{e}^* +  \beta y_e^* + \gamma z_{e}^*  \big)\cdot (1-p_e) }{1- \sum_{e \in E_d} \frac{1}{T} \cdot  \big( \alpha x_{e}^* +  \beta y_e^* + \gamma z_{e}^*  \big)\cdot p_{e}} \\
    & =  \frac{\frac{1}{T} \cdot  \big( \alpha x_{e}^* +  \beta y_e^* + \gamma z_{e}^*  \big)\cdot (1-p_e) }{1 - p_e + p_e \Big(1 - \sum_{e \in E_d} \frac{1}{T} \cdot  \big( \alpha x_{e}^* +  \beta y_e^* + \gamma z_{e}^*  \big) \Big) } \\
    & \leq  {\frac{1}{T} \cdot  \big( \alpha x_{e}^* +  \beta y_e^* + \gamma z_{e}^*  \big)\cdot  } 
\end{align*}
The above inequality is due to the fact that $\sum_{e \in E_u} \frac{1}{T} \big( \alpha {x_{e}^*}+  \beta {y_e^*} + \gamma{z_{e}^*} \big) \leq \frac{\pl}{T} < 1$. 
{
\begin{align*}
&  \mathbb{E}[B_{u,t} | A_{u,t} =0] = \sum_{k <t}\sum_{e \in E_u}  \mathbb{E}[X_{e,k}Y_{e,k} | A_{u,k} =0 ]     \\
   & \leq \sum_{k<t} \sum_{e \in E_u} \frac{1}{T}  \Big( \alpha x_{e}^* + \beta y_{e}^* + \gamma z_{e}^* \Big) \\ 
    & \leq \sum_{k<t} \frac{1}{T} \Big( \alpha \sum_{e \in E_u} {x_{e}^*  }+  \beta \sum_{e \in E_u} {y_{e}^*} +\gamma \sum_{e \in E_u} z_{e}^* \Big)    \\ 
    &  \leq \sum_{k<t}  \frac{1}{T} \Big( \alpha \cdot \Delta_u +  \beta \cdot \Delta_d+\gamma \cdot \Delta_u \Big) \\
    & =  \sum_{k<t} \frac{\Delta_u}{T} (\alpha +  \beta+\gamma ) \leq \frac{(t-1)\Delta_u}{T}
\end{align*}
}
Therefore the expected number of assignments (probes) to vertex $u$ until time $t$ is at most $\frac{(t-1)\Delta_u}{T}$. Therefore, we have:
\begin{align*}
     & Pr[B_{u,t} < \Delta_u |A_{u,t} =0] \geq  1- \frac{\mathbb{E}[B_{u,t} |A_{u,t} =0] }{\Delta_d}  \\ 
    &\geq 1  - \frac{(t-1)  \Delta_u}{T \Delta_u} \geq 1- \frac{t-1}{T} 
\end{align*}
It is to be noted that $B_{u,t}$ is the total number of probes $u$ received before round $t$. Thus, we have that the events $(B_{u,t} < \Delta_u)$ and $(A_{u,t} = 0)$ are positively correlated. Therefore, 
\begin{align*}
& Pr[SF_{u,t}]  \geq Pr[(B_{u,t} < \Delta_u) \land (A_{u,t} =0) ] \\
& \geq Pr[B_{u,t}<\Delta_d|A_{u,t} =0] Pr[A_{u,t} =0] \\
& Pr[SF_{u,t}] \geq \Big(1- \frac{t-1}{T}\Big)  \Big( 1-  \frac{1}{T}  \Big)^{t-1}
\end{align*}
\end{proof}. 
\lemmaref*
\begin{proof}
In this part we prove that the probability that edge $e$ is probed at time $t$ is at least $\alpha \frac{x^*_{e}}{2T}$. Note that the probability that a vertex of type $v$ arrives at time $t$ and that Algorithm \ref{alg:alg_intcap} calls the subroutine \PPDR($\vec{x_r}$) is $\alpha \frac{1}{T}$. Let $E_{v,\bar{e}}$ be the set of edges in $E_v$ excluding $e=(u,v)$. For each edge $e' \in E_{v,\bar{e}}$ let $Y_{e'}$ be the indicator for $e'$ being before $e$ in the random order of $\pi$ (in algorithm \ref{alg:alg_sr}) and let $Z_{e'}$ be the probability that the assignment is successful for $e'$. It is clear that $\E[Y_{e'}] = 1/2$ and that $\E[Z_{e'}]=p_{e'}$. Now we have: 
\begin{align}
&   Pr[1_{e,t} \mid SF_{u,t} ]  \\
&  \geq \alpha \frac{1}{T} Pr[X_{e}=1] Pr \big[\sum_{e' \in E_{r,\bar{e}}}X_{e'}Y_{e'} Z_{e'} \mid X_{e}=1\big] \\
& = \alpha \frac{Pr[X_{e}=1]}{T}  \big( 1 - Pr \big[\sum_{e' \in E_{v,\bar{e}}}X_{e'} Y_{e'} Z_{e'}\geq 1 \mid X_{e}=1\big] \big) \\
&  \geq \alpha \frac{Pr[X_{e}=1]}{T} \big( 1 - \mathbb{E} \big[\sum_{e' \in E_{v,\bar{e}}}X_{e'} Y_{e'} Z_{e'}\geq 1 \mid X_{e}=1\big] \big) \label{markovinequality}  \\
&\geq \alpha \frac{Pr[X_{e}=1]}{T} \big( 1 - \sum_{e' \in E_{v,\bar{e}}}\mathbb{E} \big[X_{e'} Y_{e'} Z_{e'} \geq 1 \mid X_{e}=1\big] \big) \label{loe} \\
& \geq \alpha \frac{x^*_{e}}{T} \big( 1 - \sum_{e' \in E_{v,\bar{e}} }  x_{e'}^*  \frac{1}{2}p_{e'}  \big) \label{dependentrounding}\\
& \geq \alpha \frac{x^*_{e}}{T} \big( 1 - \frac{1}{2} \big)  = \alpha \frac{x^*_{e}}{2T}  \label{constraint}
\end{align}
Applying Markov inequality we get the inequality (\ref{markovinequality}). By linearity of expectation we get inequality (\ref{loe}). Since $X_{e}$ and $X_{e'}$ are negatively correlated to each other from the  Negative Correlation property of Dependent Rounding we have $\E[X_{e'} \mid X_{e}=1] \leq {x}_{e}^*$ and we get (\ref{dependentrounding}). The last inequality (\ref{constraint}) is due the fact that for any feasible solution $\{x_{e}^*\}$ the constraints imply that $\sum_{e \in E_v}x_{e}^*p_{e} \leq 1$ for all $v \in V$. Using similar analysis we can also prove that $Pr[1_{e,t} \mid SF_{u,t} ] \geq \beta \frac{y^*_{e}}{2T}$ and $ Pr[1_{e,t} \mid SF_{u,t} ] \geq \gamma \frac{z^*_{e}}{2T}$.
\end{proof}

Now we restate and prove Theorem \ref{main_th_kiid}.
\mainthkiid*
\begin{proof}
Denote the expected number of probes on each edge $e \in E$ resulting from \PPDR\big($\vec{x}_{v}^*$\big) by $n^{x}_{e}$.
It follows that:
{
\begin{align*}
    & n^{x}_{e} \geq \sum_{t=1}^{T} Pr[1_{e,t}] = \sum_{t=1}^{T} Pr[1_{e,t} \mid SF_{u,t} ] Pr[ SF_{u,t} ] \\
    & \geq \sum_{t=1}^{T}  \Big( 1-  \frac{1}{T}  \Big)^{t-1}   \Big( 1-  \frac{t-1}{T}  \Big)\Big( \alpha \frac{x^*_{e}}{2T}  \Big) \xrightarrow{T \rightarrow \infty} \frac{ \alpha  x^*_{e}}{2e} 
\end{align*}
}
Denote the optimal solution for the operator's profit LP by $OPT_O$. Let $ALG_O$ be operator's profit obtained by our online algorithm. Using the linearity of expectation we get: $ {ALG_{O}}=\mathbb{E}\Big[ \sum_{e \in E} {\wto n^{x}_{e}p_{e}}\Big]  \geq   \sum_{ e \in E} {\wto  p_{e}}  \frac{ \alpha  x^*_{e}  }{2e} \geq   \sum_{ e \in E} {\wto  p_{e}}\Big( \frac{1}{e} \Big)  \frac{ \alpha  x^*_{e}  }{2} \geq \frac{\alpha}{2e} (OPT_{O})$. Similarly, we can obtain $\frac{\beta}{2e}$ and $\frac{\gamma}{2e}$ competitive ratios for the expected max-min group fairness guarantees on the offline and online sides, respectively.
\end{proof}

\subsection{Proofs for Section \ref{sec:kad}}

\validkad*
\begin{proof}
We shall consider only the operator's profit objective as the other objectives follow through an identical argument. Let $1_{v,t}$ be the indicator random variable for the arrival for vertex type $v$ in round $t$. Then we can obtain a realization and solve the corresponding LP and then take the expected value of LP as an upper bound on the operator's profit objective, i.e. the value $\E_{\mathcal{I}}[LP(G)]$ where $\E_{\mathcal{I}}$ is an expectation with respect to the randomness of the problem. This means replacing $1_{v,t}$ by its realization in the LP below: 
{
\begin{align}
    & \textstyle\max \sum\limits_{t \in [T]} \sum\limits_{e \in E}{\wto \xet }  \label{sys_obj_kad_real} 
    \end{align}
    \vspace{-0.5cm}
    \begin{subequations}\label{constraintkad}
    \begin{align}
     &  \text{s.t} \quad  \forall e \in E, \forall t \in [T]:  0 \leq \xet \leq 1 \label{constraintkad_real : probkad_real } \\
    &  \textstyle\forall u \in U: \sum\limits_{t \in [T]} \sum\limits_{e \in E_u} \xet \leq 1  \label{constraintkad_real: lmatchkad_real} \\
    &  \textstyle\forall v \in V, \forall t \in [T]: \sum_{e \in E_v} \xet \leq 1_{v,t}  \label{constraintkad_real : rmatchkad_real} 
    \end{align}
   \end{subequations}
}
If we were to replace the random variables $1_{v,t}$ by their expected value, then we would retrieve LP(\ref{sys_obj_kad}) where $\E_{\mathcal{I}}[1_{v,t}]=\pvt$. It suffices to show that the value of LP(\ref{sys_obj_kad}) which is the LP value over the ``expected'' graph (the parameters replaced by their expected value) which we now denote by $LP(\E_{\mathcal{I}}[G])$ is an upper bound to $\E_{\mathcal{I}}[LP(G)]$, i.e. $LP(\E_{\mathcal{I}}[G]) \ge \E_{\mathcal{I}}[LP(G)]$. Let $x^{*,G}_{e,t}$ be the optimal solution for a given realization $G$ and $1^G_{v,t}$ be the realization of the random variables over the instance, then we have that $\sum_{e \in E_v} x^{*,G}_{e,t} \leq 1^G_{v,t}$. It follows that $\E_{\mathcal{I}}[x^{*,G}_{e,t}]$ is a feasible solution for LP(\ref{sys_obj_kad}), since $\E_{\mathcal{I}}[\sum_{e \in E_v}  x^{*,G}_{e,t}] \leq \E_{\mathcal{I}}[1^G_{v,t}] = \pvt$ and the rest of the constraints are satisfied as well since they are the same in every realization. However, we have that $\E_{\mathcal{I}}[LP(G)]=\E_{\mathcal{I}}[\sum\limits_{t \in [T]} \sum\limits_{e \in E}{\wto x^{*,G}_{e,t} }] = \sum\limits_{t \in [T]} \sum\limits_{e \in E}{\wto \E_{\mathcal{I}}[x^{*,G}_{e,t}] } \leq \sum\limits_{t \in [T]} \sum\limits_{e \in E}{\wto x^{*}_{e,t} } = LP(\E_{\mathcal{I}}[G])$ where $x^{*}_{e,t}$ is the optimal solution for LP(\ref{sys_obj_kad}) over the ``expected'' graph. The inequality followed since a feasible solution to a problem cannot exceed its optimal solution. 
\end{proof}

\kadalgvalid*
\begin{proof}
We prove the validity of the algorithm for $\lambda=\frac{1}{2}$ by induction. For the base case, it is clear that $\forall e \in E, \ret=1$, hence $\ret 
\ge \lambda =\frac{1}{2}$. Assume for $t'<t$, that $\rho_{e,t'} \ge \lambda=\frac{1}{2}$, then at round $t$ we have: 
\begin{align*}
    1-\ret & = \Pr[\text{$e$ is not available at $t$}]  \\
    & = \Pr[\text{$e$ is matched in $[T-1]$}] \\
    & \leq \sum_{t'<t} \Pr[\text{$e$ is matched in $t'$}] \\
    & = \sum_{t'<t} \Pr[(\text{$e$ is chosen by the algorithm}) \\ 
    & \land (\text{$u$ is unmatched at the beginning of $t$}) \\
    & \land (\text{$v$ arrives at $t$}) ] \\ 
    & = \sum_{t'<t} \pvt \ret (\alpha \frac{\xets}{\pvt}\frac{\lambda}{\ret} + \beta \frac{\yets}{\pvt}\frac{\lambda}{\ret} + \gamma \frac{\zets}{\pvt}\frac{\lambda}{\ret}) \\
    & = \sum_{t'<t} \lambda ( \alpha x^*_{e,t'} + \beta y^*_{e,t'} + \gamma z^*_{e,t'} ) \\
    & \leq \lambda \sum_{t'<t}  ( \alpha x^*_{e,t'} + \beta y^*_{e,t'} + \gamma z^*_{e,t'} ) \\ 
    &  \leq \lambda (\alpha+\beta+\gamma) \leq \lambda \leq \frac{1}{2}
\end{align*}
where we used the fact that $x^*_{e,t'},y^*_{e,t'},z^*_{e,t'} \leq 1$ from constraint (\ref{constraintkad : probkad}) and the fact that $\alpha+\beta+\gamma \leq 1$. From the above, it follows that $\ret \ge \frac{1}{2} \ge \lambda$. 
\end{proof}

Now we restate and prove Theorem \ref{main_th_kad} using Lemma \ref{kad_alg_valid}: 
\mainthkad*
\begin{proof}
For an edge $e$ the probability that it is matched (successfully probed) is the following:
\begin{align*}
    & \Pr[\text{$e$ is successfully probed in round $t$}]  \\ 
    & = \Pr[(\text{$e$ is chosen by the algorithm}) \\ 
    & \land (\text{$u$ is unmatched at the beginning of $t$}) \land (\text{$v$ arrives at $t$}) ] \\ & = \pvt \ret (\alpha \frac{\xets}{\pvt}\frac{\lambda}{\ret} + \beta \frac{\yets}{\pvt}\frac{\lambda}{\ret} + \gamma \frac{\zets}{\pvt}\frac{\lambda}{\ret}) =\\
    & = \alpha \lambda \xets + \beta \lambda \yets + \gamma \lambda \zets \\ 
\end{align*}
Setting $\lambda=\frac{1}{2}$, it follows from the above that $e$ is successfully matched with probability at least $ \frac{1}{2}\alpha \xets$, at least $ \frac{1}{2}\beta \yets$, and at least $ \frac{1}{2} \gamma \zets$. Hence, the guarantees on the competitive ratios follow by linearity of the expectation. 
\end{proof}

\subsection{Proofs for Section \ref{sec:indiv}}
We restate Lemma \ref{reduc_indiv_group} and give its proof: 
\reducgroup* 
\begin{proof}
Given an instance with individual fairness, define $\grps=\{g_1,\dots,g_T\} \cup \{g'_1,\dots,g'_{|U|}\}$ as the set of all groups, thus $|\grps|=T+|U|$, i.e. one group for each time round and one group for each offline vertex. Further given the online side types $V$, create a new online side $V'$ where $|V'|=T|V|$ and $V'=V'_1 \cup V'_2 \dots \cup V'_t\dots \cup V'_T$ where $V'_t$ consists of the same types as $V$. Moreover, $\forall v' \in V'_t, p_{v',t}=\pvt$ and $p_{v',\bar{t}}=0, \forall \bar{t} \in [T]-\{t\}$, finally $\forall v' \in V'_t, g(v')=g_t$. For the offline side $U$, we let each vertex have its own distinct group membership, i.e. for vertex $u_i \in U$, $g(u_i)=g'_i$. 

Based on the above, it is not difficult to see that both problems have the same operator profit, and that the individual max-min fairness objectives of the original instance equal the group max-min fairness objectives of the new instance. 
\end{proof}

From the above Lemma, applying algorithm $\tsfkad$ to the reduced instance leads to the following corollary:
\begin{corollary} \label{corollary_indiv}
Given an instance of two-sided individual max-min fairness, applying $\tsfkad(\alpha,\beta,\gamma)$ to the reduction from Theorem \ref{reduc_indiv_group} leads to a competitive ratio of $(\frac{\alpha}{2},\frac{\beta}{2},\frac{\gamma}{2})$ simultaneously over the operator's profit, the individual fairness objective for the offline side, and the individual fairness objective for the online side, where $\alpha,\beta,\gamma>0$ and $\alpha+\beta+\gamma \leq 1$. 
\end{corollary}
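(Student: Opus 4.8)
The plan is to obtain this corollary as an immediate composition of the reduction in Lemma~\ref{reduc_indiv_group} with the group-fairness guarantee of Theorem~\ref{main_th_kad}. The reduction hands us a \textbf{KAD} instance whose \emph{group} max-min fairness objectives on the two sides coincide with the \emph{individual} max-min fairness objectives of the original instance, while the operator's profit is unchanged. Running $\tsfkad(\alpha,\beta,\gamma)$ on that reduced instance and then pulling the guarantee back through the reduction should yield the claimed $(\frac{\alpha}{2},\frac{\beta}{2},\frac{\gamma}{2})$ competitive ratio on the three individual objectives.

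Concretely, I would carry out three steps. First, invoke Lemma~\ref{reduc_indiv_group} to build the reduced \textbf{KAD} group-fairness instance and record the three identities it guarantees: operator profit is preserved; the offline individual objective equals the offline group objective of the reduced instance (each $u$ is its own group); and the online individual objective $\min_{t}\E[\mathrm{util}(v_t)]$ equals the online group objective of the reduced instance, since the group $g_t$ collects exactly the copies of types that can arrive in round $t$. Second, apply Theorem~\ref{main_th_kad} to the reduced instance, which delivers $(\frac{\alpha}{2},\frac{\beta}{2},\frac{\gamma}{2})$ on the three \emph{group} objectives relative to the reduced instance's benchmark $\OPT$. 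Third, transfer this ratio to the original instance by arguing that the optimum of each objective is identical across the two instances, so that the ratio $\ALG/\OPT$ is unchanged when read back.

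The crux is this last step: I must show the reduction preserves the benchmark $\OPT$ of every objective, not merely the value attained by one particular algorithm. This calls for a correspondence between admissible online policies on the two instances that leaves all three objective values invariant—mapping a policy on the original instance to one on the reduced instance that probes the unique round-$t$ copy of each arriving type, and conversely collapsing the per-round copies back to their original types. One direction yields $\OPT_{\mathrm{reduced}} \ge \OPT_{\mathrm{original}}$ and the other $\OPT_{\mathrm{original}} \ge \OPT_{\mathrm{reduced}}$, giving equality; combined with the matching equality on the algorithm's attained value, the competitive ratios then coincide. I would also verify that the hypotheses of Theorem~\ref{main_th_kad}, in particular $p_e=1$ for all $e$, are inherited by the reduced instance, since the reduction only rewrites the arrival probabilities $\pvt$ and leaves the edge-success probabilities untouched. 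The remaining bookkeeping—checking feasibility of the constructed policies under the patience and capacity constraints of the reduced instance—is routine.
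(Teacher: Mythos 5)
Your proposal is correct and follows essentially the same route as the paper, which obtains the corollary exactly by composing Lemma~\ref{reduc_indiv_group} with Theorem~\ref{main_th_kad} applied to the reduced \textbf{KAD} instance. Your additional care in verifying that the reduction preserves each objective's \emph{optimum} (via a two-way policy correspondence) and that the $p_e=1$ hypothesis carries over merely makes explicit what the paper's proof of the lemma asserts when it states that the operator profit and the fairness objectives coincide across the two instances.
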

The proof of Theorem \ref{main_th_indiv} is immediate from the above corollary. 

\subsection{Proofs for Theorems \ref{Hardness_th} and \ref{Hardness_th_indiv_group}}
We now restate and prove the hardness result of Theorem \ref{Hardness_th}:
\Hardnessth*
\begin{proof}
We prove it for group fairness in the \textbf{KIID} setting, since the \textbf{KIID} setting is a special case of the \textbf{KAD} setting, then this also proves the upper bound for the \textbf{KAD} setting.

Consider the graph $G=(U,V,E)$ which consists of three offline vertices and three online vertex types, i.e. $|U|=|V|=3$. Each vertex in $U$ ($V$) belongs to its own distinct group. The time horizon $T$ is set to an arbitrarily large value. The arrival rate for each $v \in V$ is uniform and independent of time, i.e. \textbf{KIID} with $\pv=\frac{1}{3}$. Further, the bipartite graph is complete, i.e. each vertex of $U$ is connected to all of the vertices of $V$ with $p_e=1$ for all $e \in E$. We also let $\Delta_u =1$ for each $u \in U$, $n_v =\frac{T}{3}$ and $\Delta_v=1$ for each $v \in V$. We represent the utilities on the edges of $E$ with matrices where the $(i,j)$ element gives the utility of the edge connecting vertex $u_i \in U$ and vertex $v_j \in V$. The utility matrices for the platform operator, offline, and online sides are following, respectively: 
$$M_O =  \begin{bmatrix}
 1 & 0 &0 \\
 0 &1 &0 \\
 0 & 0 &1 
\end{bmatrix} ,M_U =  \begin{bmatrix}
 0 & 0 &1 \\
 1 & 0  &0 \\
 0 & 1 & 0 
\end{bmatrix} , M_V=  \begin{bmatrix}
 0 & 1 &0 \\
 0 &0 &1 \\
 1 & 0 &0 
\end{bmatrix}.$$
It can be seen that the utility assignments in the above example conflict between the three entities. 

Let $\OPT_O, \OPT_U,$ and $\OPT_V$ be the optimal values for the operator's profit, offline group fairness, and online group fairness, respectively. It is not difficult to see that $\OPT_O=3$, $\OPT_U=1$, and $\OPT_V=1$.
Now, denote by $A,B,$ and $C$ the edges with values of 1 for $M_O,M_U,$ and $M_V$ in the graph, respectively. Further, for a given online algorithm, let $a_j,b_k,$ and $c_{\ell}$ be the expected number of probes received by edges $j \in A, k \in B,$ and $\ell \in C$, respectively. Moreover, denote the algorithm's expected value over the operator's profit, expected fairness for offline vertices, and expected fairness for online vertices by $\ALG_O,\ALG_U$, and $\ALG_V$, respectively. We can upper bound the sum of the competitive ratios as follows:

\begin{align*}
    & \frac{\ALG_O}{\OPT_O} + \frac{\ALG_U}{\OPT_U} + \frac{ALG_V}{\OPT_V}
    \\
    & \leq \frac{\sum_{j \in A} a_j}{3} + \frac{\min_{k \in B} b_j}{1} + \frac{\min_{\ell \in C} c_j}{1} \\ 
    & \leq \frac{\sum_{j  \in A} a_i}{3} + \frac{\big(\sum_{k \in B} b_i \big)/3}{1} + \frac{\big(\sum_{\ell \in C} c_i \big)/3}{1} \\ 
    & \leq \frac{\sum_{j  \in A} a_i + \sum_{k  \in B} b_i  + \sum_{\ell  \in C} c_i }{3}    \leq \frac{3}{3} = 1 
\end{align*}

in the above, the second inequality follows since the minimum value is upper bounded by the average. The last inequality follows since $\Delta_u=1$ and therefore the expected number of probes any offline vertex receives cannot exceed 1 and we have $|U|=3$ many vertices. 

To prove the same result for individual fairness we use the same graph. We note that the arrival of vertices in $V$ is \textbf{KAD} instead with the $i^{\text{th}}$ vertex $v_i$ having $p_{v_i,i}=1$ and $p_{v_i,t}=0, \forall t\neq i$. Then we follow an argument similar to the above. 
\end{proof}
The following proves Theorem \ref{Hardness_th_indiv_group} therefore showing that there is indeed a conflict between achieving group and individual fairness even if we were to consider only one side of the graph.

\Hardnessthindivgroup*
\begin{proof}
Let us focus on the offline side, i.e. we consider $\ag$ and $\ai$ that are the competitive ratios for the group and individual fairness of the offline side. 

Consider a graph which consists of two offline vertices and one online vertex, i.e. $|U|=2$ and $|V|=1$. Further, there is only one group. Let $p_e=1, \forall e\in E$ and $\forall u\in U, \forall v \in V: \pl=\pr=1$. $U$ has two vertices $u_1$ and $u_2$ both connected to the same vertex $v \in V$. For edge $(u_1,v)$, we let $w^U_{(u_1,v)}=1$ and for edge $(u_2,v)$, we let $w^U_{(u_2,v)}=L$ where $L$ is an arbitrarily large number. Note that both of these weights are for the utility of the offline side. Finally, we only have one round so $T=1$. 

Let $\theta_1$ and $\theta_2$ be the expected number of probes edges $(u_1,v)$ and $(u_2,v)$ receive, respectively. Note that $\theta_1=1-\theta_2$. It follows that the optimal offline group fairness objective is $\OPT^U_{G}=\max\limits_{\theta_1, \theta_2} ( \theta_1+L\theta_2) = \max\limits_{\theta_2} ( (1-\theta_2)+L\theta_2) = L$. Further, the optimal offline individual fairness objective is $\OPT^U_{I}= \min\{\theta_1,L\theta_2\}$, it is not difficult to show that $\OPT^U_{I}=\frac{L}{L+1}$. Now consider the sum of competitive ratios, we have: 
\begin{align*}
    \frac{\ALG^U_G}{\OPT^U_{G}} + \frac{\ALG^U_I}{\OPT^U_{I}} & = \frac{\theta_1+L\theta_2}{L} + \frac{\min\{\theta_1,L\theta_2\}}{\frac{L}{L+1}} \\ 
    & \leq \frac{\theta_1+L\theta_2}{L} + \frac{\theta_1(L+1)}{L} \\ 
    & = \frac{(L+2)\theta_1+L\theta_2}{L} \\ 
    & = (\theta_1+\theta_2) + \frac{2\theta_1}{L} \\ 
    & \leq 1 + \frac{2\theta_1}{L} \xrightarrow{L\rightarrow \infty} 1 
\end{align*}
this proves the result for the offline side of the graph. 

To prove the result for the online side, we reverse the graph construction, i.e. having one vertex in $U$ and two vertex types in $V$ which arrive with equal probability. It now holds that $\OPT^V_{I}= \min\{\theta_1,L\theta_2\}$ and by setting $T$ to an arbitrarily large value $\OPT^V_{G}=L$. Then we follow an identical argument to the above. 
\end{proof}

\section{Additonal Experimental Results}
As mentioned before one of the major contributions of our work is that we consider the operator's profit and fairness for both sides \emph{simultaneously} instead of fairness for only one side. To further see the effects of ignoring one side, we run $\tsf$ with one side ignored (see table \ref{table:one_side_ignored}). It is clear that the fairness objective for the ignored side is indeed lower in comparison to what can be achieved in figure \ref{fig:fm_exp_fig}. More precisely, we can see that the Offline (Driver) and Online (Rider) fairness can be simultaneously improved to around $0.5$ by setting $\alpha=0.5, \beta=\gamma=0.25$ (figure \ref{fig:fm_exp_fig}) whereas their values when their optimization weight is set to zero is $0.387$ and $0.41$ , respectively (see table \ref{table:one_side_ignored}).

\begin{table}[h!]

\begin{tabular}{cccc}
\hline
                             & \multicolumn{1}{l}{\textbf{Profit}} & \multicolumn{1}{l}{\textbf{Driver  Fairness}} & \multicolumn{1}{l}{\textbf{Rider Fairness}} \\ \hline
\textbf{$\alpha = \gamma = 0.5, \beta = 0$}           & 0.43            & 0.387                                                              & 0.509                                      \\
\textbf{$\alpha = \beta = 0.5, \gamma = 0$} & 0.564           & 0.498                                                              & 0.41                                                                  
\end{tabular}
\caption{Results of running $\tsf$ on the NYC dataset with the fairness on one side ignored, i.e. its optimization weight set to 0: (\textbf{Top Row}) Offline (Driver) fairness ignored ($\alpha = \gamma=0.5, \beta = 0$) and (\textbf{Bottom Row}) Online (Rider) fairness ignored ($\alpha = 0.5, \beta = 0.5, \gamma = 0$).}
\label{table:one_side_ignored}
\end{table}

\end{document}